\newcommand\dboxed[1]{\dbox{\ensuremath{#1}}}
\pgfplotsset{compat=newest}
\definecolor{darkred}{RGB}{200,100,100}
\let\lat\empty
\let\set\mathbb
\def\<#1>{\langle#1\rangle}
\newcommand{\lonely}{\operatorname{lonely}}
\newtheorem{algo}{Algorithm}
\def\eatspace#1{#1}
\def\step#1#2{\par\kern1pt\dimen44=#2em\advance\dimen44 1.67em\hangindent\dimen44\hangafter=1\noindent\rlap{\small#1}\kern\dimen44\relax\eatspace}
\begin{document}

\title{Lonely Points in Simplices
}


\author{Maximilian Jaroschek \and
  Manuel Kauers \and
  Laura Kov{\'a}cs
}


\institute{Maximilian Jaroschek\at
  TU Wien, Institute for Logics and Computation\\
  Favoritenstr.\ 9--11,1040 Wien, Austria\\
  \email{maximilian@mjaroschek.com}\\
  Supported by the Austrian Science Fund (FWF) grant P 31427-N31.
  \and
  Laura Kov{\'a}cs\at
  TU Wien, Institute for Logics and Computation\\
  Favoritenstr.\ 9--11,1040 Wien, Austria\\
  \email{lkovacs@forsyte.at}\\
  Supported by the ERC StG SYMCAR 639270, the KAW Wallenberg Academy
  Fellowship TheProSE and the Austrian Science Fund (FWF) grant W1255-N23.
  \and
  Manuel Kauers \at
  JKU Linz, Institute for Algebra\\
  Altenberger Str.\ 69, 4040 Linz, Austria\\
  \email{manuel@kauers.de}\\
  Supported by the Austrian Science Fund (FWF) grants F5004
  and FWF P31571-N32.
}

\date{Received: date / Accepted: date}

\maketitle

\begin{abstract}
  Given a lattice $\lat L\subseteq\set Z^m$ and a subset $A\subseteq\set R^m$,
  we say that a point in $A$ is \emph{lonely} if it is not equivalent modulo $\lat L$
  to another point of~$A$.
  We are interested in identifying lonely points for specific choices of $\lat L$ when
  $A$ is a dilated standard simplex, and in conditions on $\lat L$ which ensure
  that the number of lonely points is unbounded as the simplex dilation goes to infinity.

  \keywords{integer points, polytopes, lattices, discrete geometry}
\end{abstract}

\section{Introduction}
\label{intro}

The geometric problem considered in this article arose from an attempt to construct an
algorithm for simplifying so-called C-finite sequences. A sequence $(a_n)_{n=0}^\infty$
in the field $\set C$ of complex numbers
is called \emph{C-finite}~\cite{tetrahedron} if it satisfies a linear
recurrence with constant coefficients\footnote{W.l.o.g, we 
  consider $\set C$ instead of an algebraically closed arbitrary field of characteristic zero}, i.e., if there are constants $c_0,\dots,c_r\in\set C$,
not all zero, such that
\[
  c_0a_n + c_1a_{n+1} + \cdots + c_ra_{n+r} = 0
\]
for all $n\in\set N$. A standard example is the sequence of Fibonacci numbers (take $c_0=c_1=1$ and $c_2=-1$).
C-finite sequences and their properties  are very well understood~\cite{stanley99,everest03,tetrahedron,zeilberger13,kauers17f}.
In particular, it is known that a sequence is
C-finite if and only if it can be expressed as a linear combination of exponential terms
with polynomial coefficients, i.e., if there are polynomials $p_1,\dots,p_m\in\set C[x]$
and constants $\phi_1,\dots,\phi_m\in\set C$, such that
\[
  a_n = p_1(n)\phi_1^n + \cdots + p_m(n)\phi_m^n
\]
for all $n\in\set N$. If the $\phi_i$ are pairwise distinct and all the $p_i$ are nonzero, then
the order $r$ of the corresponding recurrence turns out to be $m+\sum_{i=1}^m\deg(p_i)$.

One of the consequences of the characterization above is that the class of C-finite sequences is
closed under addition and multiplication, i.e., when the sequences $(a_n)_{n=0}^\infty$ and $(b_n)_{n=0}^\infty$
are C-finite, then so are the sequences $(a_n+b_n)_{n=0}^\infty$ and $(a_nb_n)_{n=0}^\infty$.
In particular, when we plug a C-finite sequence into a polynomial, the result is again a C-finite
sequence. For example, since the sequence $(F_n)_{n=0}^\infty$ of Fibonacci-numbers is C-finite,
so is the sequence $(5F_n^3-7F_n^2+9F_n-4)_{n=0}^\infty$ obtained by plugging $(F_n)_{n=0}^\infty$
into the polynomial $5x^3-7x^2+9x-4\in\set C[x]$.

Given a C-finite sequence $(a_n)_{n=0}^\infty$, specified by a recurrence of order~$r$ and a set of initial
values, we want to decide whether there is a polynomial $q\in\set C[x]$ of positive degree such
that the C-finite sequence $(q(a_n))_{n=0}^\infty$ satisfies a recurrence of order less than~$r$.
This problem is of interest 
because certain number-theoretic questions about C-finite sequences
can at the moment only be answered when the recurrence order is not too large,
e.g.~see~\cite{worrell}. By using results of our paper to  pass
from $(a_n)_{n=0}^\infty$ to $(q(a_n))_{n=0}^\infty$,
we hope to extend the scope of these algorithms and advance, for
example,  their use
in applications of static analysis of computer systems, see for example~\cite{HumenbergerJK17,Worrell18}.

The construction of an algorithm for finding~$q\in\set C[x]$, such that
$(q(a_n))_{n=0}^\infty$ yields a C-finite sequence of lower order than $a$, 
has led us to the following geometric problem.
Let $\lat S\subseteq\set R^m$ be the standard simplex, i.e., the convex hull of $0$ and the unit vectors
$e_1,\dots,e_m\in\set R^m$.
Moreover, let $\lat L\subseteq\set Z^m$ be a lattice, i.e., an additive subgroup of~$\set Z^m$. Two points
$u,v\in\set R^m$ are called equivalent modulo $\lat L$ if we have $u-v\in \lat L$.
We consider the integer points in a dilation $d\lat S$ of~$\lat S$, for some $d>0$.
A point $u\in d\lat S\cap\set Z^m$ is called \emph{lonely} if there does not exist any other point
$v\in d\lat S\cap\set Z^m$ such that $u-v\in \lat L$. In this paper, we
are interested to describe properties of these lonely points. 

In Section~\ref{sec:ansatz}, we will give some more details on how the original problem about C-finite sequences
leads to the consideration of lonely points. This material is provided only as background information
and not strictly needed for the rest of the paper. In Section~\ref{sec:basics}, we summarize basic definitions
and facts about cones, simplices, and lattices, and fix the notation we use.
In Section~\ref{sec:algs} we present algorithms that for a given lattice $\lat L$ and a given $d$ determine all
the lonely points, and recognize whether the number is unbounded as $d$ goes to infinity.
Finally, in Section~\ref{sec:thms} we derive a sufficient condition on the lattice that guarantee that the number
of lonely points is unbounded.

\section{Ansatz and Exponent Lattice}\label{sec:ansatz}

Consider a C-finite sequence $(a_n)_{n=0}^\infty$ which satisfies a recurrence
of order~$r$.  We want to know whether there is a polynomial $q\in\set
C[x]\setminus\set C$ such that $(q(a_n))_{n=0}^\infty$ satisfies a recurrence of
lower order. If we have an upper bound $d$ on the degree of~$q$, then this question can
be answered as follows:
\begin{enumerate}
\item Compute $p_1,\dots,p_m\in\set C[x]$ and $\phi_1,\dots,\phi_m\in\set C$
  such that $a_n = p_1(n)\phi_1^n + \cdots + p_m(n)\phi_m^n$ for all $n\in\set N$
  (see \cite{tetrahedron} for how to do this).
\item Make an ansatz $q=q_0+q_1x+\cdots+q_dx^d$ with undetermined
  coefficients $q_0,\dots,q_d$, plug the closed form representation of step~1 into~$p$.
\item Write the resulting expression in the form
  \[
   \tikz[scale=.3,baseline=-.1cm]\draw(0,0)circle(1cm and .5cm);\psi_1^n + \cdots +
   \tikz[scale=.3,baseline=-.1cm]\draw(0,0)circle(1cm and .5cm);\psi_\ell^n
  \]
  where the $\psi_i\in\set C$ are pairwise distinct and the $\tikz[scale=.3,baseline=-.1cm]\draw(0,0)circle(1cm and .5cm);$ are polynomials
  in $n$ whose coefficients are $\set C$-linear combinations of the unknowns $q_0,\dots,q_d$.
\item For every subset $I\subseteq\{1,\dots,\ell\}$ such that $|\{1,\dots,\ell\}\setminus I|=r-1$,
  equate the coefficients with respect to $n$ in all the $\tikz[scale=.3,baseline=-.1cm]\draw(0,0)circle(1cm and .5cm);$ expressions belonging to
  some $\psi_i$ with $i\in I$ to zero and solve the resulting linear system for the
  unknowns $q_0,\dots,q_d$. If the solution space contains a vector $(q_0,\dots,q_d)$
  in which not only $q_0$ is nonzero, return the corresponding polynomial $q_0+q_1x+\cdots+q_dx^d$.
  Otherwise, try the next~$I$.
\item When no subset $I$ yields a solution, return ``there is no such $q$''.
\end{enumerate}

\begin{example}
  \begin{enumerate}
  \item The C-finite sequence $(a_n)_{n=0}^\infty$ with $a_n=1+2^n+2^{-n}$ satisfies a recurrence
    of order~3 and no lower order recurrence. With $d=2$, the algorithm sketched above finds
    the polynomial $q(x)=x^2-2x-1$. Indeed, $q(a_n)=4^n+4^{-n}$ satisfies a recurrence of order~2.
  \item The C-finite sequence $(a_n)_{n=0}^\infty$ with
    $a_n=1+3^n+3^{2 n}+2\cdot 3^{3 n}-2\cdot 3^{4 n}$
    satisfies a recurrence of order~5 and no lower order recurrence. For this input, the algorithm
    finds the polynomial $q(x)=x^2-3x+2$, and indeed,
    $q(a_n)=-3^n+7\cdot 3^{4 n}-8\cdot 3^{7 n}+4\cdot 3^{8 n}$ satisfies a recurrence of order~4.

    Similar examples can be constructed using polynomials with sparse powers. Such polynomials
    have been studied for example in~\cite{coppersmith91}.
  \item The C-finite sequence $(a_n)_{n=0}^\infty$ with $a_n=1+2^n-2^{-n}$ satisfies a recurrence
    of order~3, and with the algorithm sketched above we can show that there is no polynomial
    $q$ of degree $d\leq5$ such that $q(a_n)$ satisfies a recurrence of order~2.
  \end{enumerate}
\end{example}

When we have checked the existence of a polynomial $q$ for a specific degree $d$ and found
that no such polynomial exists, we can try again with a larger choice of~$d$. It would
be good to know when we can stop: starting from the recurrence of $(a_n)_{n=0}^\infty$, can
we determine a finite bound on the degree of the polynomials~$q$ that may lead to lower order
recurrences?

In order to see from where such a bound could emerge, restrict the search to polynomials $q$
with~$q_d=1$. Observe what happens in step~2 of
the procedure sketched above. Plugging the expression $p_1(n)\phi_1^n + \cdots + p_m(n)\phi_m^n$
into the ansatz for~$q$ produces
\begin{alignat}1 \label{eq:1}
  &q_0 + q_1 \sum_{i=1}^m p_i(n)\phi_i^n + q_2 \sum_{i,j=1}^n p_i(n)p_j(n)(\phi_i\phi_j)^n \notag\\
  &+ \cdots + \sum_{i_1,\dots,i_d=1}^n \prod_{j=1}^d p_{i_j}(n)\biggl(\prod_{j=1}^d\phi_{i_j}\biggr)^n,
\end{alignat}
so the $\psi_i$'s appearing in step~3 are precisely the products $\phi_1^{v_1}\dots\phi_m^{v_m}$
with $v_1+\cdots+v_m\leq d$. If these products 
are all distinct, then there is no way for the above
expression to vanish identically. More generally, a necessary condition for the above expression
to vanish identically for some choice of $q_0,\dots,q_{d-1}$, not all zero, is that a sufficient amount of cancellation
takes place among the various exponential sequences $((\phi_1^{v_1}\dots\phi_m^{v_m})^n)_{n=0}^\infty$.

This leads to the consideration of the so-called \emph{exponent lattice}
\[
  \lat L = \{\,(v_1,\dots,v_m)\in\set Z^m : \phi_1^{v_1}\dots\phi_m^{v_m}=1\,\}\subseteq\set Z^m,
\]
which also plays an important role for determining the algebraic relations among C-finite sequences~\cite{dependencies}.
For example, for the Fibonacci numbers, where we have $\phi_1=\frac12(1+\sqrt5)$ and $\phi_2=\frac12(1-\sqrt5)$,
the exponent lattice is generated by $(2,2)$.

A term $(\phi_1^{v_1}\dots\phi_m^{v_m})^n$ appearing in~\eqref{eq:1} cannot be canceled unless there is
some other point $(\tilde v_1,\dots,\tilde v_m)\in\set N^m$ with $\tilde v_1+\cdots+\tilde v_m\leq d$ and
$(v_1-\tilde v_1,\dots,v_m-\tilde v_m)\in\lat L$. If
$d$ is such that $r$ or more of the terms have no partner for cancellation, then it is clear that there
is no solution $q$ of degree~$d$. Moreover, if $\lat L$ is such that the number of terms without partner
tends to infinity as $d$ increases, then there is a finite bound on the degree that a solution $q$ may
have.

\section{Lattices and Cones}\label{sec:bound:lattices}\label{sec:basics}

We start by recalling  some basic concepts from discrete geometry.
Further background can be found in~\cite{beck07}, for example. 

\begin{definition}[Lattices]
  A set $\lat L\subset\set Z^m$ is called a \emph{lattice} if it contains the
  origin and for all $u,v\in\lat L$ and all $\alpha,\beta\in\set Z$ also
  $\alpha u+\beta v$ is an element of $\lat L$. For vectors
  $\ell_1,\dots,\ell_k\in\set Z^m$ we write $\< \ell_1,\dots,\ell_k>$ for the
  smallest lattice containing $\ell_1,\dots,\ell_k$, which we call generators of the
  lattice. The dimension $\dim(\lat L)$ of a lattice is defined as the dimension of
  the $\set R$-vector space it generates.
\end{definition}

We always view a lattice $\lat L\subseteq\set Z^m$ as a set of points in the
ambient space~$\set R^m$, spanned by the unit vectors $e_1,\dots,e_m$.  In
addition, it will be convenient to let $e_0$ be the zero vector.

\begin{example}
  The vectors $(3,3)$ and $(6,1)$ span a lattice in~$\set R^2$ of dimension~2.
  Some points in the lattice in the positive quadrant are depicted in
  Figure~\ref{fig:lattice:a}.
  The 2-dimensional lattice spanned by the vectors $(2,1,0)$ and $(0,2,1)$ in $\set R^3$
  is illustrated in Figure~\ref{fig:lattice:b}.
  \captionsetup[subfigure]{singlelinecheck=true}
  \begin{figure}[ht]
    \centering
    \begin{subfigure}[b]{.5\textwidth}
      \centering
      \vspace{1.cm}
      \newtest{\conditionb}[6]{%
        \cnttest{(#5)*(#1)+(#6)*(#3)}>{-1}
        \AND
        \cnttest{(#5)*(#2)+(#6)*(#4)}>{-1}
        \AND
        \cnttest{(#5)*(#1)+(#6)*(#3)}<{19}
        \AND
        \cnttest{(#5)*(#2)+(#6)*(#4)}<{19}
      }
      \begin{tikzpicture}[scale=0.3]
        \draw[thick,->] (0,0) -- (19,0);
        \draw[thick,->] (0,0) -- (0,19);

        \foreach \x in {0,...,18}{
          \draw[style=dotted] (\x,0) -- (\x,18);}
        \foreach \y in {0,...,18}{
          \draw[style=dotted] (0,\y) -- (18,\y);}

        \def\a{6};
        \def\b{1};
        \def\c{3};
        \def\d{3};
        \draw[color=orange] (12,0) -- (0,12);
        \foreach \i in {-10,...,10}{
          \foreach \j in {-10,...,10}{
            \ifthenelse{\conditionb{\a}{\b}{\c}{\d}{\i}{\j}}{
              \fill[color=blue] (\i*\a+\j*\c,\i*\b+\j*\d) circle (.16);}{}
          }}

        \fill[opacity=0.3, fill=orange]
        (12,0)
        -- (0,0)
        -- (0,12)
        -- cycle;
        \draw[thick,-] (0.16,0) -- (13,0);
        \draw[thick,-] (0,0.16) -- (0,4.84);
        \draw[thick,-] (0,5.16) -- (0,9.84);
        \draw[thick,-] (0,10.16) -- (0,13);

      \end{tikzpicture}
      \caption{2d lattice in 2d space.}
      \label{fig:lattice:a}
    \end{subfigure}%
    \begin{subfigure}[b]{.5\textwidth}
      \centering
      \newtest{\condition}[8]{%
        \cnttest{(#7)*(#1)+(#8)*(#4)}>{-1}
        \AND
        \cnttest{(-#7)*(#3)+(-#8)*(#6)}>{-1}
        \AND
        \cnttest{(#7)*(#2)+(#8)*(#5)}>{-1}
        \AND
        \cnttest{(#7)*(#1)+(#8)*(#4)}<{8}
        \AND
        \cnttest{(-#7)*(#3)+(-#8)*(#6)}<{5}
      }
      \begin{tikzpicture}[scale=0.5]
        \def\h{4}
        \shade[bottom color=white, top color=black!40] (0,0,0) -- (6,0,0) --
        (6,0,-4) -- (0,0,-4) -- cycle;
        \foreach \x in {0,...,6}{
          \draw[style=dotted] (\x,0,0) -- (\x,0,-4);
        }
        \foreach \z in {0,...,4}{
          \draw[style=dotted] (0,0,-\z) -- (6,0,-\z);
        }

        \draw[color=darkred!130] (0,4,0) -- (0,0,-4);
        \draw[color=darkred!130] (4,0,0) -- (0,0,-4);
        \draw[color=orange] (0,\h)--(\h,0);
        \draw[thick,->] (0,0,0) -- (7.5,0,0);
        \draw[thick,->] (0,0,0) -- (0,12.5,0);
        \draw[thick,->] (0,0,0) -- (0,0,-6);

        \def\a{2};
        \def\b{1};
        \def\c{0};
        \def\d{0};
        \def\e{2};
        \def\f{-1};

        \foreach \i in {-6,...,6}{
          \foreach \j in {-6,...,6}{
            \ifthenelse{\condition{\a}{\b}{\c}{\d}{\e}{\f}{\i}{\j}}{
              \draw[line width=0.1pt,color=black] (\i*\a+\j*\d,\i*\b+\j*\e,\i*\c+\j*\f)
              -- (\i*\a+\j*\d,0,\i*\c+\j*\f);
              \fill[color=blue] (\i*\a+\j*\d,\i*\b+\j*\e,\i*\c+\j*\f) circle (.1);}{}
          }}

        \pgfdeclarehorizontalshading{myshadingA}
        {3cm}{color(0cm)=(orange);color(0.1cm)=(orange);color(0.2cm)=(orange!70);color(0.4cm)=(orange!30);color(0.8cm)=(orange!40)}

        \shade[shading=myshadingA,shading angle=45,opacity=0.3] (\h,0) -- (0,\h) -- (1.51,1.56);
        \fill[left color=orange!20,right color=orange, opacity=0.3, fill=orange!20]
        (0,0,0)
        -- (0,0,-\h)
        -- (0,\h,0);
        \fill[bottom color=orange!20,top color=orange, opacity=0.3, fill=orange!20]
        (0,0,0)
        -- (0,0,-\h)
        -- (\h,0,0);

        \draw[thick,-] (0.1,0,0) -- (7.5,0,0);
        \draw[thick,-] (0,0.1,0) -- (0,12,0);

      \end{tikzpicture}
      \caption{2d lattice in 3d space.}
      \label{fig:lattice:b}
    \end{subfigure}
    \caption{Lattices in the positive orthant. The orange areas mark the dilated simplices
      $12\lat S$ and
      $4\lat S$ respectively.}
    \label{fig:lattice}
  \end{figure}
\end{example}

\begin{definition}[Standard Simplex]
  The \emph{standard simplex} $\lat S$ in $\set R^m$ is the convex hull of the
  points $e_0,\dots,e_m$. For $d\in\set N$, the $d$-dilation $d\lat S$ of
  $\lat S$ is the convex hull of the points $d e_0,\dots,d e_m$.
\end{definition}

We are interested in the integer points of a dilated lattice $d\lat S\subseteq\set R^m$.
Obviously, this set consists of all points $(v_1,\dots,v_m)$ in $\set Z^m$ with $v_1,\dots,v_m\geq0$
and $v_1+\cdots+v_m\leq d$. We can also describe it as an intersection of translated
cones.

\begin{definition}[Cones]
  A set $\lat C\subseteq\set Z^m$ is called a \emph{(discrete) cone} if $\lat C$
  contains the origin and we have that for all $u,v\in\lat C$ and for all
  $\alpha,\beta\in\set N$, the linear combination $\alpha u+\beta v$ is also an
  element of~$\lat C$. For vectors $c_1,\dots,c_n\in\set Z^m$ we write
  $[c_1,\dots,c_n]$ for the smallest cone containing $c_1,\dots,c_n$, which we
  call generators of the cone. For $c\in\lat C$, $[c]$ is called an \emph{edge} of
  $\lat C$ if there exists a hyperplane $H\subset \set R^m$ with
  $H\cap\lat C\subseteq [c]$. We call edges of the form $[e_i]$ or $[-e_i]$,
  $i\in\{1,\dots,m\}$ \emph{straight,} while all other edges are called
  \emph{slanted}. For $i\in\{0,\dots,m\}$, we define the $i$th \emph{corner cone}
  $\lat C_i$ of the standard simplex as
  $[e_0-e_i,e_1-e_i\dots,e_m-e_i]\subseteq\set Z^m$.
\end{definition}

Subsequently, we will only be concerned with finitely generated cones. We can
therefore assume that a cone $\lat C$ is always given as a finite set of points $c_i$,
such that for each $i$,  $[c_i]$ is an edge of $\lat C$, and for $j\neq i$ we
have $[c_i]\neq [c_j]$.

The standard simplex in $\set R^m$ has $m+1$ distinct corner cones
$\lat C_0,\dots,\lat C_m$, and the set of all integer points in $d\lat  S$,
$d\in\set N$, is equal to the intersection
$\bigcap_{i=0}^{m} (\lat C_i-de_i)$, as illustrated for
dimension~2 in Figure~\ref{fig:corner}.

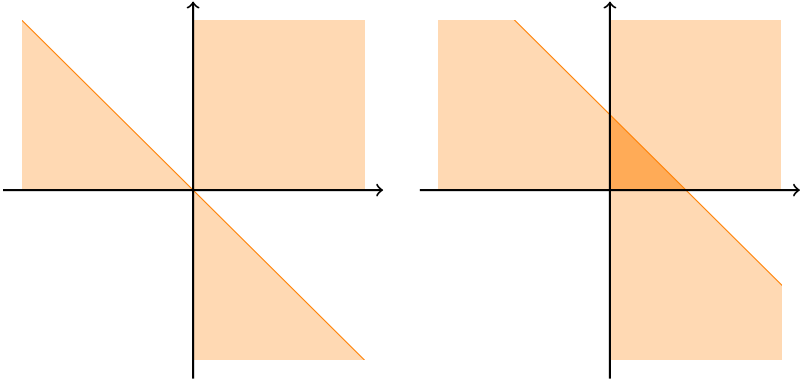
\begin{figure}[th]
  \begin{subfigure}[b]{.5\textwidth}
    \centering
    \begin{tikzpicture}[scale=0.5]
      \begin{scope}
        \clip (-4.5,-4.5)
        rectangle (4.5,4.5);
        \draw[color=orange] (-5,5) -- (5,-5);
        \fill[color=orange,opacity=0.3] (0,0) -- (5,-5) -- (0,-5) -- cycle;
        \fill[color=orange,opacity=0.3] (0,0) -- (-5,5) -- (-5,0) -- cycle;
        \fill[color=orange,opacity=0.3] (0,0) -- (0,4.5) -- (4.5,4.5) -- (4.5,0) -- cycle;
      \end{scope}
      \draw[thick,->] (-5,0) -- (5,0); \draw[thick,->] (0,-5) -- (0,5);
    \end{tikzpicture}
  \end{subfigure}%
  \begin{subfigure}[b]{.5\textwidth}
    \begin{tikzpicture}[scale=0.5]
      \begin{scope}
        \clip (-4.5,-4.5) rectangle (4.5,4.5); \draw[color=orange] (0,2) -- (10,-8);
        \fill[color=orange,opacity=0.3] (0,2) -- (10,-8) -- (0,-8) -- cycle;
        \draw[color=orange] (-8,10) -- (0,2); \fill[color=orange,opacity=0.3]
        (2,0) -- (-8,10) -- (-8,0) -- cycle; \fill[color=orange,opacity=0.3]
        (0,0) -- (0,4.5) -- (4.5,4.5)-- (4.5,0) -- cycle;
      \end{scope}
      \draw[thick,->] (-5,0) -- (5,0); \draw[thick, ->] (0,-5) -- (0,5);
    \end{tikzpicture}
  \end{subfigure}
  \caption{Corner cones of the standard simplex and the intersection of the
    translated cones in $\set R^2$.}
  \label{fig:corner}
\end{figure}

As we outlined in the earlier sections, we look for integer points in $d\lat S$
that are not connected to any other integer points in $d\lat S$ via a given
lattice $\lat L$. The next definition formalizes this idea not only for
simplices but general subsets of $\set R^m$.

\begin{definition}[Lonely Points]
  Let $\lat L\subseteq\set Z^m$ be a lattice. We define the equivalence relation
  $\sim$ on $\set Z^m$ as $u\sim v:\Leftrightarrow u-v\in\lat L$. Let $A$ be an
  arbitrary subset of $\set R^m$. A point $v\in A$ is called \emph{lonely} (with
  respect to $\lat L$), if $v\in\set Z^m$ and there is no
  $\tilde v\in (A\cap\set Z^m)\setminus\{v\}$ such that $\tilde v- v\in\lat L$.
  We write $\lonely_{\lat L}(A)$ for the set of lonely points in~$A$ and
  $\#\lonely_{\lat L}(A)\in\set N\cup\{\infty\}$ for the number of lonely points
  in~$A$.
\end{definition}

\begin{example}
  \label{ex:lonely}
  We give two examples of lattices where the number of lonely points
  in $d S$ does not grow indefinitely with~$d$.
  \begin{enumerate}
  \item\label{ex:lonely:1} For $\lat L=\<\binom{2}{-3}>\subseteq\set Z^2$ there are 9 lonely
    points in all $d\lat S$ for all $d\geq 4$ (Figure~\ref{fig:lonely},
    left).
  \item For $\lat L=\<\binom{1}{1}>\subseteq\set Z^2$ there are 4 lonely
    points in all $d\lat S$ for all $d\geq 2$.
    (Figure~\ref{fig:lonely}, right).
  \end{enumerate}
  It is easy to show that there is no lattice (other than $\{0\}$) such that the
  number of lonely points in $dS$ grows indefinitely with~$d$.
  \begin{figure}[th]
    \begin{subfigure}[b]{.5\textwidth}
      \centering
      \begin{tikzpicture}[scale=.67]
        \draw[thick,->] (0,0)--(6,0);
        \draw[thick,->] (0,0)--(0,6);
        \foreach \x in {0,...,5}
        \foreach \y in {0,...,5}
        \fill[color=blue] (\x,\y) circle (2.23pt);
        \fill[opacity=0.3,fill=orange] (0,0)--(5,0)--(0,5)--cycle;
        \foreach \x/\y in {0/0, 0/1, 0/2, 1/0, 1/1, 1/2,3/2, 4/1, 5/0} \draw[style=thick] (\x,\y) circle (3.7pt);
        \clip (0,0) rectangle (5.5,5.5);
        \foreach \y in {0,...,14}
        \draw[opacity=0.3] (0,\y) -- (\y*2,\y-\y*3)  (0,\y+0.5) -- (\y*2,\y+0.5-\y*3);
      \end{tikzpicture}
    \end{subfigure}%
    \begin{subfigure}[b]{.5\textwidth}
      \centering
      \begin{tikzpicture}[scale=.67]
        \draw[thick,->] (0,0)--(6,0);
        \draw[thick,->] (0,0)--(0,6);
        \foreach \x in {0,...,5}
        \foreach \y in {0,...,5}
        \fill[color=blue] (\x,\y) circle (2.23pt);
        \fill[opacity=0.3,fill=orange] (0,0)--(5,0)--(0,5)--cycle;
        \foreach \x/\y in {0/5, 0/4, 4/0, 5/0} \draw[style=thick] (\x,\y) circle (3.7pt);
        \clip (0,0) rectangle (5.5,5.5);
        \foreach \x in {-8,...,7}
        \draw[opacity=0.3] (0,\x)--(8,\x+8);
      \end{tikzpicture}
    \end{subfigure}
    \caption{Illustration of Example~\ref{ex:lonely}. Lonely points are encircled.}
    \label{fig:lonely}
  \end{figure}
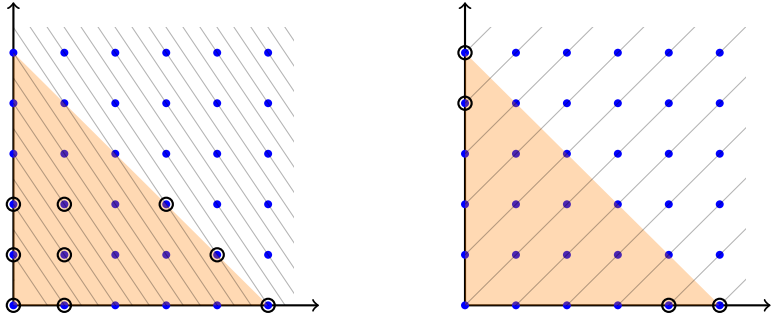
\end{example}

\begin{example}
  Let $\lat L\subseteq\set Z^4$ be the lattice generated by the
  vectors $(2,0,-1,0)$ and $(1,1,0,-1)$.
  Then there are infinitely many lonely points in any corner cone.
  For example, for each $i=0,\dots,4$, all vectors of the form $(0,n,0,0)-de_i$ with
  $d\geq n\geq 0$ are
  lonely in $\lat C_i$.
\end{example}

\makeatletter
\pgfdeclarepatternformonly[\LineSpace]{mygrid}{\pgfqpoint{-1pt}{-1pt}}{\pgfqpoint{\LineSpace}{\LineSpace}}{\pgfqpoint{\LineSpace}{\LineSpace}}%
{
  \pgfsetcolor{\tikz@pattern@color}
  \pgfsetlinewidth{0.3pt}
  \pgfpathmoveto{\pgfqpoint{0pt}{0pt}}
  \pgfpathlineto{\pgfqpoint{\LineSpace + 0.1pt}{\LineSpace + 0.1pt}}
  \pgfusepath{stroke}
}
\makeatother

\newdimen\LineSpace
\tikzset{
  LineSpace/.code={\LineSpace=#1},
  LineSpace=3pt
}

Our goal is to count the lonely points in a dilated simplex. As we will use the
translated corner cones to characterize the points inside of a dilated simplex,
we want to make sure that lonely points stay lonely after any translation.

\begin{lemma}
  \label{lem:translatelonely}
  Let $\lat L\subset\set Z^m$ be a lattice and let $v\in A\subseteq\set
  R^m$. If $v\in\lonely_{\lat L}(A)$, then $v+t\in\lonely_{\lat L}(A+t)$ for any $t\in\set Z^m$.
\end{lemma}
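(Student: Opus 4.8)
The plan is to observe that translation by an integer vector $t$ is a bijection of $\set Z^m$ which maps $A\cap\set Z^m$ onto $(A+t)\cap\set Z^m$ and which preserves equivalence modulo $\lat L$. Since loneliness of a point is a statement purely about the size of its $\sim$-equivalence class inside $A\cap\set Z^m$, it should transport verbatim along this bijection. Concretely, I would verify the two requirements of the definition of loneliness for $v+t$ separately: that $v+t$ is an integer point, and that it has no partner in $A+t$.

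First I would check the integrality condition. Since $v\in\lonely_{\lat L}(A)$, the definition gives $v\in\set Z^m$, and as $t\in\set Z^m$ it follows that $v+t\in\set Z^m$. Hence $v+t$ is eligible to be lonely in $A+t$, and it remains only to rule out a cancellation partner.

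The core of the argument is this negative condition, which I would establish by contraposition. Suppose $v+t$ is \emph{not} lonely in $A+t$; then there is some $\tilde w\in\bigl((A+t)\cap\set Z^m\bigr)\setminus\{v+t\}$ with $\tilde w-(v+t)\in\lat L$. Set $\tilde v:=\tilde w-t$. From $\tilde w\in A+t$ we get $\tilde v\in A$; from $\tilde w,t\in\set Z^m$ we get $\tilde v\in\set Z^m$; and $\tilde w\neq v+t$ forces $\tilde v\neq v$. The decisive point is that the common translate cancels in the difference, $\tilde w-(v+t)=\tilde v-v$, so $\tilde v-v\in\lat L$. Thus $\tilde v\in(A\cap\set Z^m)\setminus\{v\}$ witnesses that $v$ is not lonely in $A$, contradicting the hypothesis.

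There is no genuinely hard step here; the only thing that needs care is that the hypothesis $t\in\set Z^m$ is used twice—once to guarantee that $v+t$ is an integer point, and once to guarantee that the pulled-back candidate partner $\tilde v=\tilde w-t$ is again an integer point—while the translation-invariance of differences, $\tilde w-(v+t)=\tilde v-v$, is exactly what makes membership in $\lat L$ survive the shift. If one preferred a contradiction-free phrasing, I would instead package these observations as the statement that $x\mapsto x+t$ restricts to a $\sim$-preserving bijection from $A\cap\set Z^m$ onto $(A+t)\cap\set Z^m$, and conclude immediately that singleton equivalence classes on one side correspond to singleton equivalence classes on the other.
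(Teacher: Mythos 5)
Your proof is correct and follows essentially the same route as the paper: assume $v+t$ is not lonely in $A+t$, pull the witness back by $-t$, and note that the translation cancels in the difference so membership in $\lat L$ is preserved. Your version is simply more explicit about the integrality of the pulled-back witness and the inequality $\tilde v\neq v$, which the paper's terser proof leaves implicit.
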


\begin{proof}
  Suppose $v+t\notin \lonely_{\lat L}(A+t)$. Then there exists a $\tilde v\in A$
  such that $(v+t)\sim (\tilde v+t)$. It follows that $v-\tilde v = (v
  +t) - (\tilde v +t)\in\lat L$, so $v\sim \tilde v$. \qed
\end{proof}

\section{Counting and Identifying Lonely Points}\label{sec:general:lonely}\label{sec:algs}

In this section we develop algorithms for deciding whether in a given
setting the number of lonely points is finite or infinite, as well as an algorithm
which in the finite case determines how many lonely points there are. First we characterize
loneliness of points in cones, and then we relate the loneliness of points in a
dilated cone $d\lat S$ to the loneliness of points in its corner cones.

\begin{lemma}
  \label{lem:cones}
  Let $\lat L\subseteq\set Z^m$ be a lattice and
  $\lat C\subseteq\set Z^m$ be a cone.
  \begin{enumerate}
  \item\label{lem:cones:1} If $\lat C$ has any lonely points, then $0$ is one of
    them.
  \item\label{lem:cones:2} $\lat C$ has lonely points if and only if $\lat
    L\cap\lat C=\{0\}$.
  \item\label{lem:cones:3} If $u\in\lat C$ is not lonely,
    then also $u+v$ is not lonely for any $v\in\lat C$.
  \end{enumerate}
\end{lemma}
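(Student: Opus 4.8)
My plan is to reduce all three claims to a single observation: because $\lat C$ is closed under $\set N$-linear combinations, any lattice vector that connects two points of $\lat C$ can be translated freely, in the sense that if $x \in \lat C$ and $w \in \lat L \cap \lat C$, then $x + w \in \lat C$ as well. The loneliness of the origin turns out to be the pivotal case, since $0$ is lonely if and only if $\lat L \cap \lat C = \{0\}$: indeed $0$ fails to be lonely precisely when some nonzero $w \in \lat C$ satisfies $w - 0 = w \in \lat L$, which is exactly the statement $\lat L \cap \lat C \neq \{0\}$. I would record this equivalence first, as parts (1) and (2) become immediate consequences once the translation idea is available.

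For part (1) I would argue by contrapositive. Assume $0$ is not lonely, so there is a nonzero $w \in \lat L \cap \lat C$. Given any $u \in \lat C$, the point $u + w$ lies in $\lat C$ by closure under addition, is distinct from $u$ because $w \neq 0$, and satisfies $(u+w) - u = w \in \lat L$; hence $u$ is not lonely. Thus if $\lat C$ has any lonely point at all, $0$ must itself be lonely. Part (2) then follows at once: if $\lat C$ has a lonely point, part (1) makes $0$ lonely, which by the recorded equivalence gives $\lat L \cap \lat C = \{0\}$; conversely, if $\lat L \cap \lat C = \{0\}$ then $0$ is lonely by the same equivalence, so $\lat C$ has a lonely point.

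For part (3), suppose $u \in \lat C$ is not lonely and fix a witness $\tilde u \in \lat C \setminus \{u\}$ with $w := \tilde u - u \in \lat L$, noting $w \neq 0$. For an arbitrary $v \in \lat C$ I would propose $\tilde u + v$ as a partner for $u + v$: it lies in $\lat C$ by closure under addition, it differs from $u + v$ since $\tilde u \neq u$, and $(\tilde u + v) - (u + v) = \tilde u - u = w \in \lat L$, so $u + v$ is not lonely. I do not expect a genuine obstacle here, as the whole lemma rests on the cone axioms; the only points requiring care are to check in each step that the proposed partner actually lies in $\lat C$ and is distinct from the original point, and both follow directly from closure under $\set N$-combinations together with $w \neq 0$. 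One mild subtlety worth flagging is that loneliness is defined through points of $A \cap \set Z^m$, but here $A = \lat C \subseteq \set Z^m$, so this intersection is all of $\lat C$ and can be suppressed throughout.
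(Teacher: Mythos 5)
Your proof is correct and follows essentially the same route as the paper: the contrapositive argument for part~(1) (a nonzero $w\in\lat L\cap\lat C$ disqualifies every $u\in\lat C$ via the partner $u+w$), the reduction of part~(2) to the loneliness of the origin, and the translation-of-witness argument $\tilde u+v$ for part~(3) are all exactly the paper's steps. The only difference is cosmetic: you isolate the equivalence ``$0$ lonely $\Leftrightarrow \lat L\cap\lat C=\{0\}$'' as a preliminary observation, whereas the paper folds it into parts~(1) and~(2).
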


\begin{proof}
  \begin{enumerate}
  \item If $0$ is not lonely, it is equivalent to some other point of $\lat C$,
    say to $u\neq0$.  Then $u=u-0\in\lat L$. Let $v$ be an arbitrary element
    of~$\lat C$. Since $u\in\lat C$, we have $v+u\in \lat C$, and since $v$ and
    $v+u$ are equivalent, $v$ is not lonely.
  \item If $\lat C$ has lonely points, then, by the previous item, $0$ is one of
    them, hence $\lat L\cap \lat C=\{0\}$. For the other direction, if
    $\lat L\cap \lat C=\{0\}$, then $0$ is lonely.
  \item If $u$ is not lonely, then there exists
    $\tilde u\in \lat C\setminus\{u\}$ with $u\sim \tilde u$.  Then also
    $u+v\sim \tilde u+v$, and since $\tilde u+v$ is in $\lat C$ and different
    from $u+v$, the claim follows.\qed
  \end{enumerate}
\end{proof}

\begin{proposition}
  \label{prop:infinite}
  Let $\lat L\subseteq\set Z^m$ be a lattice and
  $\lat C=[c_1,\dots,c_n]\subseteq\set Z^m$ be a cone.
  \begin{enumerate}
  \item\label{prop:infinite:1} If $\lat C$ has infinitely many lonely points,
    then there is an $i\in\{1,\dots,n\}$ such that all points in $[c_i]$ are
    lonely in $\lat C$.
  \item\label{prop:infinite:2} Let $i\in\{1,\dots,n\}$.  Then all points in
    $[c_i]$ are lonely in $\lat C$ if and only if $\lat L\cap \lat C=\{0\}$ and
    $(\lat L+\<c_i>)\cap \lat C=[c_i]$.
  \end{enumerate}
\end{proposition}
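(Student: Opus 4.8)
The plan is to reduce everything to the structural facts already established in Lemma~\ref{lem:cones}: part~\ref{lem:cones:2}, which characterises when $\lat C$ has any lonely points at all, and part~\ref{lem:cones:3}, which says that the set of non-lonely points is closed under adding arbitrary elements of $\lat C$. This second fact is the engine for both claims: once a point of $\lat C$ fails to be lonely, its whole upward shadow $u+\lat C$ fails as well, so loneliness can only survive near the ``bottom'' of the cone.

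For part~\ref{prop:infinite:1} I would prove the contrapositive: assuming that \emph{no} edge is entirely lonely, I show there are only finitely many lonely points. Since each $[c_i]$ is not entirely lonely, pick a non-lonely point $t_ic_i\in[c_i]$ and set $S=\max_i t_i$. Now take any lonely $w\in\lat C$ and write it as a non-negative integer combination $w=\sum_j a_jc_j$ (possible because $\lat C=[c_1,\dots,c_n]$ is exactly the set of such combinations). If some $a_i\ge t_i$, then $w=(t_ic_i)+\bigl((a_i-t_i)c_i+\sum_{j\ne i}a_jc_j\bigr)$ exhibits $w$ as $t_ic_i$ plus an element of $\lat C$; since $t_ic_i$ is not lonely, part~\ref{lem:cones:3} of Lemma~\ref{lem:cones} makes $w$ not lonely, a contradiction. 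Hence $a_j<t_j\le S$ for every $j$, so $w$ lies in the finite set $\{\sum_j a_jc_j : 0\le a_j<S\}$. Thus only finitely many lonely points exist, which is the contrapositive of part~\ref{prop:infinite:1}.

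For part~\ref{prop:infinite:2} I treat the two implications separately. For ``$\Leftarrow$'', assume $\lat L\cap\lat C=\{0\}$ and $(\lat L+\<c_i>)\cap\lat C=[c_i]$, and take $p=tc_i\in[c_i]$. If $p$ had a partner $q\in\lat C\setminus\{p\}$ with $q-p\in\lat L$, then $q=p+(q-p)\in\lat L+\<c_i>$ would lie in $(\lat L+\<c_i>)\cap\lat C=[c_i]$, say $q=sc_i$; then $(s-t)c_i=q-p\in\lat L$, and whichever of $s>t$ or $s<t$ holds, the corresponding positive multiple $(s-t)c_i$ or $(t-s)c_i$ lies in $\lat L\cap[c_i]\subseteq\lat L\cap\lat C=\{0\}$, forcing $s=t$ and contradicting $q\ne p$. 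Hence $p$ is lonely. For ``$\Rightarrow$'', assume all of $[c_i]$ is lonely. Then $0\in[c_i]$ is lonely, so $\lat L\cap\lat C=\{0\}$ by part~\ref{lem:cones:2} of Lemma~\ref{lem:cones}. The inclusion $[c_i]\subseteq(\lat L+\<c_i>)\cap\lat C$ is immediate; for the reverse, take $w=\ell+kc_i\in\lat C$ with $\ell\in\lat L$, $k\in\set Z$. For any sufficiently large $T\ge0$ the point $w+Tc_i\in\lat C$ is equivalent (their difference being $\ell$) to $(k+T)c_i\in[c_i]$, which is lonely, so loneliness forces $w+Tc_i=(k+T)c_i$ and hence $w=kc_i$.

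The main obstacle is the final step of ``$\Rightarrow$'': having reached $w=kc_i\in\lat C$, I must rule out negative $k$ to conclude $w\in[c_i]$. Here I would invoke that $[c_i]$ is an \emph{edge}: reading the definition as providing a genuine supporting hyperplane $H$ with $H\cap\lat C\subseteq[c_i]$ and $c_i\in H$, the subspace $H$ contains the whole line $\set R c_i$, so $w=kc_i\in H\cap\lat C\subseteq[c_i]$ and therefore $k\ge0$. Equivalently, edge-ness guarantees $\lat C\cap\<c_i>=[c_i]$, since a negative multiple of $c_i$ inside $\lat C$ would (together with $c_i$) force the line $\set R c_i$ into $\lat C$. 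This is precisely where the hypothesis that $c_i$ spans an edge — rather than an arbitrary ray — is indispensable, and I would take care to exclude the degenerate non-pointed situation where $\lat C$ contains $\set R c_i$.
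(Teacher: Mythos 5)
Your proof is correct and follows the same overall strategy as the paper: part~\ref{prop:infinite:1} via the Dickson-style argument that a non-lonely point on each edge kills everything above it (part~\ref{lem:cones:3} of Lemma~\ref{lem:cones}), and the ``$\Leftarrow$'' half of part~\ref{prop:infinite:2} by forcing any partner of $tc_i$ back onto the edge and into $\lat L\cap\lat C$. Two local differences are worth recording. First, your bounding in part~\ref{prop:infinite:1} is stated more carefully than the paper's: the paper declares non-lonely the points with $\beta_j\ge\alpha_j$ for \emph{all} $j$ simultaneously, which read literally still leaves infinitely many candidates; your version, where a single coordinate $a_i\ge t_i$ already suffices, is the correct reading. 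Second, in the ``$\subseteq$'' step of part~\ref{prop:infinite:2} ``$\Rightarrow$'' the paper writes $v=\ell+\alpha c_i$ with $\ell\in\lat L$ \emph{nonzero} and splits on the sign of $\alpha$ (loneliness of $\alpha c_i$ for $\alpha>0$, and $\lat L\cap\lat C=\{0\}$ for $\alpha\le 0$), silently passing over the case $v\in\set Z c_i$; you instead translate by $Tc_i$, use loneliness to force $\ell=0$, and then invoke the edge property to exclude negative multiples of $c_i$. Your explicit treatment of that degenerate case is a genuine improvement rather than a detour: both proofs ultimately need pointedness of $\lat C$ along $c_i$ (equivalently $\lat C\cap\<c_i>=[c_i]$), which the paper never makes explicit but which follows either from its standing assumption that $c_1,\dots,c_n$ are linearly independent over $\set Q$, or from reading ``edge'' as a supporting hyperplane containing $c_i$, exactly as you do.
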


\begin{proof}
  \begin{enumerate}
  \item Suppose to the contrary all edges $[c_i]$ contain a nonlonely point,
    say $\alpha_1c_1,\dots,\alpha_nc_n$ are not lonely for certain positive
    integers $\alpha_1,\dots,\alpha_n$.  By part~\ref{lem:cones:3} of
    Lemma~\ref{lem:cones}, all points $\beta_1c_1+\cdots+\beta_nc_n$ with
    $\beta_1\geq\alpha_1,\dots,\beta_n\geq\alpha_n$ are not lonely.  Thus there
    remain only finitely many candidates for lonely points.
  \item ``$\Rightarrow$''\quad If all points in $[c_i]$ are lonely, then
    $\lat C$ has lonely points, so $\lat L\cap \lat C=\{0\}$ by
    part~\ref{lem:cones:2} of Lemma~\ref{lem:cones}.  It remains to show that
    $(\lat L+\<c_i>)\cap \lat C=[c_i]$.  The direction ``$\supseteq$'' is clear.
    To show ``$\subseteq$'', let $v\in(\lat L+\<c_i>)\cap \lat C$, say
    $v=\ell+\alpha c_i\in \lat C$ for some nonzero $\ell\in\lat L$ and
    $\alpha\in\set Z$. If $\alpha> 0$, then $v\sim \alpha c_i$, in contradiction
    to the loneliness of $\alpha c_i$. Otherwise, for $\alpha \leq 0$, we have
    $\ell=v+(-\alpha) c_i\in \lat C$, a contradiction to
    $\lat L\cap\lat C=\{0\}$.

    ``$\Leftarrow$''\quad Assume $u=\alpha_ic_i$ is not lonely, say $u\sim v$
    for some $v\in \lat C\setminus\{u\}$.  Then $u-v\in\lat L$ implies
    $v\in\lat L+\<c_i>$, so $v\in [c_i]$, say $v=\beta_ic_i$ for some
    $\beta\in\set N\setminus\{\alpha_i\}$.  But then
    $0\neq u-v\in\lat L\cap \lat C=\{0\}$, a contradiction.\qed
  \end{enumerate}
\end{proof}

The conditions of Proposition~\ref{prop:infinite} give rise to the following algorithm
for deciding whether a cone contains infinitely many lonely points.

\begin{algo}\label{alg:infiniteQ} (hasInfinitelyManyLonelyPoints)
  \par\noindent\textbf{Input:} a lattice $\lat L\subseteq\set Z^m$, a cone $\lat C=[c_1,\dots,c_n]\subseteq\set Z^m$
  \par\noindent\textbf{Output:} true or false, depending on whether $\lat C$ contains infinitely many lonely points

  \par\medskip
  \step 10 if $\lat L\cap\lat C\neq\{0\}$ then return false
  \step 20 for $i=1,\dots,n$, do:
  \step 31 if $(\lat L+\<c_i>)\cap\lat C=[c_i]$ then return true
  \step 40 return false
\end{algo}

The tests in lines 1 and~3 can be performed using integer linear programming~\cite{schrijver98}.
If $\lat L=\<\ell_1,\dots,\ell_k>=[\ell_1,\dots,\ell_k,-\ell_1,\dots,-\ell_k]$, we can find
nonnegative integers $\alpha_1,\dots,\alpha_k,\alpha_{-1},\dots,\alpha_{-k},\beta_1,\dots,\beta_n$
such that
\[
 (\alpha_1-\alpha_{-1})\ell_1+\dots+(\alpha_k-\alpha_{-k})\ell_k=\beta_1c_1+\dots+\beta_n c_n
\]
and such that $\beta_1+\cdots+\beta_n$ is maximized. We have $\lat L\cap\lat C=\{0\}$
if and only if the optimal solution is $\beta_1=\cdots=\beta_n=0$.

Similarly, in order to check whether $(\lat L+\<c_i>)\cap\lat C=[c_i]$, we can find
nonnegative integers $\alpha_1,\dots,\alpha_k,\alpha_{-1},\dots,\alpha_{-k},\gamma_1,\gamma_{-1},\beta_1,\dots,\beta_n$
such that
\[
 (\alpha_1-\alpha_{-1})\ell_1+\dots+(\alpha_k-\alpha_{-k})\ell_k+(\gamma_1-\gamma_{-1})c_i=\beta_1c_1+\dots+\beta_n c_n
\]
and $\beta_1+\cdots+\beta_{i-1}+\beta_{i+1}+\cdots+\beta_n$ is maximized.
If the intersection $[c_i]\cap[c_1,\dots,c_{i-1},c_{i+1},\dots,c_n]$ only
contains $0$, then
$(\lat L+\<c_i>)\cap\lat C$ is contained in $[c_i]$ if and only if the optimal solution is
$\beta_1=\cdots=\beta_n=0$.
In our setting, we can always assume that $c_1,\dots,c_n$ are linearly independent over~$\set Q$,
and in this case, the condition $[c_i]\cap[c_1,\dots,c_{i-1},c_{i+1},\dots,c_n]=\{0\}$
is always satisfied.

When there are only finitely many lonely points, we can next determine how many there are.
Part~\ref{lem:cones:3} of Lemma~\ref{lem:cones} says that when some $v\in\lat C$ is not lonely, then no
point in the translated cone $v+\lat C$ is lonely either. It follows from Dickson's lemma
(\cite{becker93}, see also Lemma~4 of~\cite{aparicio12}) that the set of nonlonely points in $\lat C$
is in fact a finite union of such translated cones~$v+\lat C$, quite similar to the
leading-term ideals in Gr\"obner basis theory~\cite{becker93,cls,buchberger10}. Inspired by the
FGLM-algorithm from that theory~\cite{faugere93,cls}, we arrive at the following algorithm for counting the number of
lonely points in a cone.

\begin{algo}\label{alg:fglm} (numberOfLonelyPoints)
  \par\noindent\textbf{Input:} a lattice $\lat L\subseteq\set Z^m$ and a cone $\lat C=[c_1,\dots,c_n]\subseteq\set Z^m$
  \par\noindent\textbf{Output:} $\#\lonely_{\lat L}(\lat C)$

  \par\medskip
  \step 10 if $\#\lonely_{\lat L}(\lat C)=\infty$, return $\infty$\quad(( using Algorithm~\ref{alg:infiniteQ} ))
  \step 20 if $0$ is not lonely, return $0$
  \step 30 $\mathrm{todo} = \{e_1,\dots,e_n\}\subseteq\set R^n$\quad(( list of unit vectors of length~$n$ ))
  \step 40 $B = \emptyset$\quad(( collected nonlonely points ))
  \step 50 $npoints=1$\quad(( number of lonely points seen so far ))
  \step 60 while $|\mathrm{todo}|>0$, do:
  \step 71 select an element $v=(v_1,\dots,v_n)$ with $||v||_1$ minimal from $\mathrm{todo}$
  \step 81 $\mathrm{todo} = \mathrm{todo}\setminus\{v\}$
  \step 91 if $v_1c_1+\dots+v_nc_n$ is a lonely point, then:
  \step {10}2 $npoints=npoints+1$
  \step {11}2 for $i=1,\dots,n$, do:
  \step {12}3 if $\forall b\in B:v_1c_1+\cdots+v_nc_n+c_i\not\in b+\mathcal{C}$, then
  \step {13}4 $\mathrm{todo}=\mathrm{todo}\cup\{v+e_i\}$
  \step {14}1 else\quad(( $v$ is not lonely ))
  \step {15}2 $B=B\cup\{v_1c_1+\cdots+v_nc_n\}$
  \step {16}0 return $npoints$
\end{algo}

Three aspects need to be discussed in order to justify this algorithm: (1) that all indicated operations can
be performed algorithmically, (2) that it returns the correct output, and (3) that it terminates
for every input. Concerning the first point, the only questionable steps are the checks in steps~2
and~9 whether a given point is lonely. In order for $v$ to be not lonely, there
must be integers $\alpha_1,\dots,\alpha_k$, not all zero, such that
$v+\alpha_1\ell_1+\cdots+\alpha_k \ell_k$
also belongs to~$\lat C$, where $\ell_1,\dots,\ell_k$ are generators of $\lat L$.
Whether such integers exist can be determined with linear programming.

For the correctness,
observe first that the output $npoints$ is a lower bound on the number of lonely points, because the counter
is only incremented when we have found a new lonely point~$v$. Since we always consider the candidate
of least norm and in line~13 always add elements of larger norm to the todo-list, it is excluded that
we count the same point more than once. In order to see that the output is also an upper bound, observe
that part~\ref{lem:cones:3} of Lemma~\ref{lem:cones} implies that when $b$ is not lonely, then all the points in
$b+\lat C$ are not lonely either, so it is fair to exclude them from consideration in step~12. Since
all other points will be considered, there is no danger of undercounting. This establishes the correctness.

Finally, for justifying the termination, observe that the number of iterations of the main loop is
bounded by the number of lonely points plus the number of points that are not lonely but also not
contained in a translated cone $b+\lat C$ where $b$ is a nonlonely point discovered earlier.
By line~1, the number of lonely points is finite when the algorithm reaches the main loop, and we have
already argued above that the number of nonlonely points not contained in a translated
cone rooted at an earlier discovered nonlonely point is finite as well.

\begin{example}
  Consider the lattice $\lat L=\<\binom{2}{-3}>\subseteq\set Z^2$ and the cone
  $\lat C=[e_1,e_2]\subseteq\set Z^2$.  This cone is the corner cone $C_0$ in
  the situation considered in Example~\ref{ex:lonely}, part~\ref{ex:lonely:1}
  and depicted in Figure~\ref{fig:lonely}.  Algorithm~\ref{alg:fglm} identifies the
  lonely points of $\lat C$ as follows.
  \begin{center}
    \begin{tabular}{c|cccc|r}
      iteration & $v$ & todo & $B$ & $npoints$ & comment \\\hline
      0 & & $\{\binom10,\binom01\}$ & $\emptyset$ & $1$ & initialization \\[2pt]
      1 & $\binom10$ & $\{\binom20,\binom11,\binom01\}$ & $\emptyset$ & $2$ & $v$ is lonely \\[2pt]
      2 & $\binom01$ & $\{\binom20,\binom11,\binom02\}$ & $\emptyset$ & $3$ & $v$ is lonely \\[2pt]
      3 & $\binom20$ & $\{\binom11,\binom02\}$ & $\{\binom20\}$ & $3$ & $v$ is not lonely \\[2pt]
      4 & $\binom11$ & $\{\binom12,\binom02\}$ & $\{\binom20\}$ & $4$ & $v$ is lonely \\[2pt]
      5 & $\binom02$ & $\{\binom12,\binom03\}$ & $\{\binom20\}$ & $5$ & $v$ is lonely \\[2pt]
      6 & $\binom12$ & $\{\binom13,\binom03\}$ & $\{\binom20\}$ & $6$ & $v$ is lonely \\[2pt]
      7 & $\binom03$ & $\{\binom13\}$ & $\{\binom20,\binom03\}$ & $6$ & $v$ is not lonely \\[2pt]
      8 & $\binom13$ & $\emptyset$ & $\{\binom20,\binom03,\binom13\}$ & $6$ & $v$ is not lonely
    \end{tabular}
  \end{center}
\end{example}

The next proposition connects the lonely points in a simplex to the lonely points in its corner cones.

\begin{proposition}\label{prop:cornerlonely}
  Let $\lat L\subseteq\set Z^m$ be a lattice and let $\lat S\subseteq\set R^m$ be the standard simplex.
  \begin{enumerate}
  \item A corner $de_i$ of $d\lat S$ is lonely for all sufficiently
    large $d\in\set N$ iff $0$ is a lonely point of the corresponding
    corner cone $\lat C_i$.
  \item\label{prop:cornerlonely:2} $\forall d\in\set N: \lonely_{\lat L}(d\lat S) \supseteq \bigcup \{v-de_i\mid\exists i>0:
    v\in\lonely_{\lat L}(\lat C_i)\}\cap \set N^m$.
  \item\label{prop:cornerlonely:3} The following are equivalent:
    \begin{enumerate}
    \item $\forall i:\#\lonely_{\lat L}(\lat C_i)=\infty.$
    \item $\exists i: \#\lonely_{\lat L}(\lat C_i)=\infty,$
    \item $\forall r\in\set N\;\exists d\in\set N: \#\lonely_{\lat L}(d\lat S)>r,$
    \end{enumerate}
  \end{enumerate}
\end{proposition}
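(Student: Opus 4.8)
The plan is to reduce everything to two soft principles together with one geometric input. The soft principles are: loneliness is \emph{monotone under inclusion} (if $A\subseteq B$ and $v\in A$ is lonely in $B$, then $v$ is lonely in $A$, since every equivalent partner in $A$ is a fortiori one in $B$), and loneliness is \emph{translation invariant} (Lemma~\ref{lem:translatelonely}). The geometric input is the identification of $d\lat S\cap\set Z^m$ with the intersection of the corner cones translated to the corners $de_i$; in particular $d\lat S\cap\set Z^m$ sits inside each single translated corner cone, and that translated cone is cut out by all the simplex inequalities \emph{except} the one facet opposite $de_i$, namely $v_i\ge0$ when $i>0$ and $\sum_k v_k\le d$ when $i=0$. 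For Part~1, if $0$ is lonely in $\lat C_i$, then by Lemma~\ref{lem:translatelonely} the corner $de_i$ is lonely in the translated cone and hence, by monotonicity, in $d\lat S$ for \emph{every} $d$. Conversely, if $0$ is not lonely in $\lat C_i$, then by part~\ref{lem:cones:1} of Lemma~\ref{lem:cones} the cone has no lonely points at all, so by part~\ref{lem:cones:2} there is a nonzero $\ell\in\lat L\cap\lat C_i$; the equivalent point $de_i+\ell$ lies in the translated corner cone, hence satisfies all simplex inequalities but the single opposite facet, and once $d$ is large it satisfies that one as well, so $de_i$ acquires a neighbour inside $d\lat S$. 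This is precisely where ``for all sufficiently large $d$'' is forced. For Part~2, I translate a lonely point $v$ of $\lat C_i$ ($i>0$) to the corner $de_i$ by Lemma~\ref{lem:translatelonely}; intersecting with $\set N^m$ restores exactly the one missing inequality $v_i\ge0$, placing the point in $d\lat S$, and monotonicity finishes. The hypothesis $i>0$ is essential, since for $i=0$ the missing facet $\sum_k v_k\le d$ is not supplied by $\cap\,\set N^m$.

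For Part~3 I would run the cycle through (a)$\Rightarrow$(b)$\Rightarrow$(c) and back. The implication (a)$\Rightarrow$(b) is immediate. For (b)$\Rightarrow$(c), if some $\lat C_i$ has infinitely many lonely points, then Proposition~\ref{prop:infinite}(\ref{prop:infinite:1}) yields an entirely lonely edge $[c]$. For $i>0$ every generator of $\lat C_i$ has $i$-th coordinate $-1$, so the points $\alpha c$ with $0\le\alpha\le d$ translate by Part~2 to $d+1$ distinct lonely points of $d\lat S$ (for $i=0$ one uses a straight edge $[e_k]$ directly, together with $d\lat S\subseteq\lat C_0$). Hence $\#\lonely_{\lat L}(d\lat S)\ge d+1$ and (c) holds.

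The return direction is the substance, and I would first prove the contrapositive of (c)$\Rightarrow$(b): if \emph{every} corner cone has only finitely many lonely points, then $\#\lonely_{\lat L}(d\lat S)$ is bounded. The tool is that the nonlonely set of any cone is, by part~\ref{lem:cones:3} of Lemma~\ref{lem:cones} and Dickson's lemma, a finite union $\bigcup_{b\in B_i}(b+\lat C_i)$, and each such $b$ carries a fixed shift $\ell_{i,b}\in\lat L\setminus\{0\}$ with $u+\ell_{i,b}\in\lat C_i$ for all $u\in b+\lat C_i$. Thus for a lonely $w\in d\lat S$ and each corner $i$, either $w$ is the translate of one of the finitely many lonely points of $\lat C_i$, or the equivalent point $w+\ell_{i,b}$ leaves $d\lat S$ only through the facet opposite $de_i$, forcing $w_i<L_i$ when $i>0$ and $\sum_k w_k> d-L_0'$ when $i=0$, with $L_i,L_0'$ read off from the finite data $\{\ell_{i,b}\}$. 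A point pinned at \emph{every} corner would satisfy $\sum_{k\ge1}w_k<\sum_{i\ge1}L_i$ and $\sum_{k\ge1}w_k>d-L_0'$ at once, impossible once $d>L_0'+\sum_{i\ge1}L_i$; so for large $d$ every lonely point is a translate of a corner-cone lonely point, of which there are finitely many.

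The real obstacle is closing the cycle, i.e. promoting (b)$\Leftrightarrow$(c) to include (a) by proving (b)$\Rightarrow$(a): one infinite corner cone forces all of them infinite. Here monotonicity runs the wrong way (it transfers loneliness from larger sets to smaller ones, not between sibling corners), so I expect no purely formal argument and anticipate this to be the hardest step. My first plan is to exploit the explicit criterion of Proposition~\ref{prop:infinite}(\ref{prop:infinite:2}): a fully lonely edge $[c^{*}]$ of $\lat C_{i^{*}}$ means $\lat L\cap\lat C_{i^{*}}=\{0\}$ and $(\lat L+\<c^{*}>)\cap\lat C_{i^{*}}=[c^{*}]$, and I would try to show these two conditions propagate to a fully lonely edge at every $\lat C_i$. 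A complementary route is induction on $m$: slicing $d\lat S$ by a coordinate hyperplane turns the lonely points pinned near a facet into lonely points of a lower-dimensional dilated simplex for the sublattice $\lat L\cap\{x_{i_0}=0\}$, to which the inductive form of the equivalence applies; the delicate and most labor-intensive point is transferring the infiniteness conclusion back from the facet sub-simplex to the full-dimensional corner cone.
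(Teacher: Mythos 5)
Your Parts~1 and~2 are correct and essentially identical to the paper's: Lemma~\ref{lem:translatelonely} plus monotonicity of loneliness under passing to subsets gives one direction, and for the converse of Part~1 a nonzero $\ell\in\lat L\cap\lat C_i$ yields the partner $de_i+\ell\in d\lat S$ once $d$ is large. Likewise your (a)$\Rightarrow$(b) and (b)$\Rightarrow$(c) match the paper (your explicit treatment of the case $i=0$ via $d\lat S\subseteq\lat C_0$ is a detail the paper leaves implicit). The problem is the rest of Part~3. What you actually prove is the contrapositive $\neg\text{(b)}\Rightarrow\neg\text{(c)}$: your pinning argument needs the hypothesis that \emph{every} corner cone has finitely many lonely points, because a lonely point not accounted for at corner $i$ is only confined to the slab $w_i<L_i$, and you must intersect the constraints coming from all corners to reach a contradiction. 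This yields (b)$\Leftrightarrow$(c) and (a)$\Rightarrow$(b), but nothing flows back into (a); the implication (b)$\Rightarrow$(a) --- one infinite corner cone forces all of them infinite --- is exactly the nontrivial content of the proposition, and you leave it as two unexecuted sketches. So the proposal is genuinely incomplete.

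The idea you are missing is that the cycle should be closed at (a), not at (b): the paper proves $\neg\text{(a)}\Rightarrow\neg\text{(c)}$, i.e.\ that a \emph{single} corner cone $\lat C_i$ with finitely many lonely points already bounds $\#\lonely_{\lat L}(d\lat S)$ uniformly in $d$. One cone suffices because all of $d\lat S$ lies in the single translated cone with apex $de_i$: by part~\ref{prop:infinite:1} of Proposition~\ref{prop:infinite} there is a $d'$ such that $d'c_j$ is nonlonely for every edge $[c_j]$ of $\lat C_i$, each such point has an equivalent partner at bounded distance, and by part~\ref{lem:cones:3} of Lemma~\ref{lem:cones} these partners propagate, so that any point of $\lat C_i$ with some cone-coordinate $\alpha_j\geq d'$ inherits an equivalent partner in $\lat C_i$ at bounded distance. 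Hence the lonely points of $d\lat S$ must have all cone-coordinates with respect to the generators of $\lat C_i$ (including the coordinate $d-\sum_k v_k$ along the generator $-e_i$) bounded, leaving a candidate set whose cardinality does not depend on $d$. Once $\neg\text{(a)}\Rightarrow\neg\text{(c)}$ is available, the cycle (a)$\Rightarrow$(b)$\Rightarrow$(c)$\Rightarrow$(a) closes and the ``hard'' transfer (b)$\Rightarrow$(a) comes for free, with no propagation of the conditions of part~\ref{prop:infinite:2} of Proposition~\ref{prop:infinite} between sibling cones and no induction on $m$. Your instinct that a direct cone-to-cone transfer is hard was sound --- the paper simply never performs it. (A comparative remark: your all-corners bookkeeping is in one respect more careful than the paper's one-cone argument, which tacitly assumes the nearby equivalent partner does not leave $d\lat S$ through the facet opposite $de_i$; but that extra care is spent on the weaker implication and does not close the gap.)
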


\begin{proof}
  \begin{enumerate}
  \item
    Let $d e_i$ be a corner of $d\lat S$, and suppose $d$ is large.

    ``$\Rightarrow$''\quad We show: if $0$ is not a lonely point of the corner
    cone $\lat C_i=[e_0-e_i,\dots,e_m-e_i]$, then $d e_i$ is not a lonely point of
    $d S$.  If $0$ is not a lonely point of the corner cone, the corner
    cone contains some nonzero element of~$\lat L$, say
    $\ell=\alpha_0(e_0-e_i)+\cdots+\alpha_m(e_m-e_i)\in \lat L$ for certain
    $\alpha_0,\dots,\alpha_m\in\set N$.  Assuming, as we may, that
    $d>\alpha_0+\cdots+\alpha_m$, we have that $d e_i+\ell$ is an
    interior point of $d\lat S$ which is equivalent to $d e_i$, proving
    that $d e_i$ is not lonely.

    ``$\Leftarrow$''\quad We show: if
    $d e_i$ is not a lonely point of $d S$, then $0$ is not a lonely
    point of the corner cone. Indeed, suppose that $d e_i$ is equivalent to
    another point $v$ of $d\lat S$, say to $v=\beta_1e_1+\cdots+\beta_me_m$ for
    some $\beta_1,\dots,\beta_m\geq0$ whose sum is at most~$d$.  Then
    $v-de_i=\beta_1(e_1-e_i)+\cdots+\beta_m(e_m-e_i)+(d-\sum_j\beta_j)(e_0-e_i)$
    belongs to the $i$th corner cone, so $0$ is not a lonely point of that cone.

  \item Denote the set on the right hand side by $A_d$. Then
    $A_d\subset \lonely_{\lat L}(d\lat S)$ holds for any $d$: If
    $v-de_i\in\set N^m$ is such that $v$ is lonely in $\lat C_i$, then by Lemma~\ref{lem:translatelonely}
    $v-de_i$ is lonely in $\lat C_i-d e_i$, which contains $d\lat S$.
  \item
    ``(a) $\Rightarrow$ (b)''\quad is trivial.

    ``(b) $\Rightarrow$ (c)''\quad is an immediate consequence of part~\ref{prop:cornerlonely:2}.

    ``(c) $\Rightarrow$ (a)''\quad Suppose that $\lonely_{\lat L}(\lat C_i)$
    only contains finitely many elements for some corner cone $\lat C_i=[c_1,\dots,c_n]$. Then,
    by part~\ref{prop:infinite:1} of Proposition~\ref{prop:infinite}
    there exists a $d'$ such that for every edge $[c_j]$ in $\lat C_i$ the point
    $d' c_j$ is not lonely. For each such edge we let $d_{j}$ be the minimal
    euclidean distance of $d'c_j$ to some other element in $\lat C_i$ equivalent
    to $d'c_j$. Then any point $v=\sum \alpha_jc_j$ in $\lat C_i$ is equivalent
    to some point in distance $d_{j}$ if $\alpha_j\geq d'$ for some $j$. Setting
    $d$ to be the maximum of the $d_{j}$ this means that every such $v$ is
    equivalent to some point in distance $\leq d$. Then a point $v$ in
    $\tilde d\lat S$ for $\tilde d\geq d$ is lonely only if the coordinates of
    $v-\tilde de_i$ with respect to the generators $c_j$ of the $i$th corner cone are
    bounded by $d$, leaving only finitely many possible values for $v-\tilde de_i$.\qed
  \end{enumerate}
\end{proof}

For a specific $d\in\set N$, there are only finitely many points in $d\lat S$, and
for each of them, we can decide whether it is lonely in a similar way as described
above for a given point in a cone. The issue reduces to a linear programming question.
What we are interested in is how far the number of lonely points can grow as $d$
increases. Proposition~\ref{prop:cornerlonely} says that the lonely points in $d\lat S$
for sufficiently large $d$ are essentially the lonely points of the corner cones.
When a cone has only finitely many lonely points, they are all clustered near the apex,
so as soon as $d$
is sufficiently large, the number of lonely points in the dilated simplex $d\lat S$
is exactly the sum of the number of lonely points in its corner cones. When at least
one corner cone has infinitely many lonely points, then the number of lonely points
in $d\lat S$ is unbounded as $d$ goes to infinity. In summary, we obtain the following
algorithm.

\begin{algo} (ultimateNumberOfLonelyPoints)
  \par\noindent\textbf{Input:} a lattice $\lat L\subseteq\set Z^m$
  \par\noindent\textbf{Output:} $\lim_{d\to\infty}\#\lonely_{\lat L}(d\lat S)$

  \par\medskip
  \step 10 $s=0$
  \step 20 for all $i\in\{0,\dots,m\}$, do:
  \step 31 $\lat C=[e_0-e_i,\dots,e_m-e_i]$\quad(( consider the $i$th corner cone ))
  \step 41 $s=s+\#\lonely_{\lat L}(\lat C)$\quad(( use Algorithm~\ref{alg:fglm} ))
  \step 50 return $s$
\end{algo}

We have implemented the algorithms described in this section in Mathematica. The code is available on
the personal website of the second author.

\section{Lonely Points for Small Lattices}\label{sec:general:smalllat}\label{sec:thms}

It is clear that all integer points in $dS$ are lonely when $\lat L=\{0\}$ and
that there are no lonely points when $\lat L=\set Z^m$. More generally,
geometric intuition suggests that there should be more lonely points when
$\lat L$ is ``small''. The main result of the present section makes this
intuition quantitative. We show that whenever the dimension of $\lat L$ is less
than a certain constant multiple of the ambient dimension~$m$, then there is a
corner cone which satisfies the conditions of part~\ref{prop:infinite:2} of
Proposition~\ref{prop:infinite} and thus has infinitely many lonely points.

In the subsequent proofs we make use of sign vectors and equations. The possible
components of a sign vector are $+$, $-$, $\oplus$, $\ominus$ or $0$. We can
assign a sign vector $s$ to a given $v\in\set R^m$ in the following way. If the
$i$th component of $v$ is nonnegative, then the $i$th component of $s$ is $+$ or
$\oplus$. If the $i$th component of $v$ is nonpositive, then the $i$th component
of $s$ is $-$ or $\ominus$. If a component of $v$ is zero, then the
corresponding component of $s$ can be $0,+,-,\oplus$ or $\ominus$. A component
of $s$ is $\oplus$ or $\ominus$ only if the absolute value of the corresponding
component of $v$ is greater than or equal to the sum of the absolute values of
all other components.  With these rules, any equation $s_1+\dots+s_k=s$ of sign
vectors $s_1,\dots,s_{k},s$ is a valid equation if there are vectors
$v_1,\dots,v_{k},v\in\set R^m$, such that $v_1+\dots+v_k=v$ and for each
$i=1,\dots,k$, $s_i$ is a valid sign vector for $v_i$ and $s$ is a valid sign
vector for $v$.

\begin{example}
  \label{ex:signs}
  For the equation
  \[
    \begin{pmatrix}
      -2 \\ 1 \\ 0
    \end{pmatrix}
    +
    \begin{pmatrix}
      0 \\ -1 \\ 0
    \end{pmatrix}
    +
    \begin{pmatrix}
      1 \\ 1 \\ -1
    \end{pmatrix}
    =
    \begin{pmatrix}
      -1 \\ 1 \\-1
    \end{pmatrix},
  \]
  two valid sign equations are
  \[
    \begin{pmatrix}
      \ominus \\ + \\ -
    \end{pmatrix}
    +
    \begin{pmatrix}
      + \\ \ominus \\ -
    \end{pmatrix}
    +
    \begin{pmatrix}
      + \\ + \\ -
    \end{pmatrix}
    =
    \begin{pmatrix}
      - \\ + \\-
    \end{pmatrix},
    \text{ and }
        \begin{pmatrix}
      - \\ + \\ 0
    \end{pmatrix}
    +
    \begin{pmatrix}
      0 \\ - \\ 0
    \end{pmatrix}
    +
    \begin{pmatrix}
      + \\ + \\ -
    \end{pmatrix}
    =
    \begin{pmatrix}
      - \\ + \\-
    \end{pmatrix}.
  \]
\end{example}

We use a shorthand matrix notation
\[\begin{bmatrix}
    s_1 & s_2 & \cdots & s_k
  \end{bmatrix}=s,\] for the sign equation $s_1+\dots+s_k=s$, with
the square brackets indicating that the columns of the matrix are summed up to
obtain the right hand side. To further shorten notation, we use $\boxed{\oplus}$ and
$\boxed{\ominus}$ for nonempty square blocks of the form
\[
  \begin{matrix}
    \oplus & -      & \cdots & \cdots & \cdots       & -      & -      \\
    -      & \oplus &  &&       & -      & -      \\
    -      & -      & \ddots &&       & -      & -      \\
    \vdots & \vdots &   &\ddots&     & \vdots & \vdots \\
    -      & -      &    && \ddots   & -      & -      \\
    -      & -      &    &&     & \oplus & -      \\
    -      & -      & \cdots & \cdots & \cdots       & -      & \oplus
  \end{matrix}\qquad \text{ and }\qquad
  \begin{matrix}
    \ominus & +      & \cdots & \cdots & \cdots       & +      & +      \\
    +      & \ominus &  &&      & +      & +      \\
    +      & +      &   \ddots&&     & +     & +      \\
    \vdots & \vdots &   &\ddots&     & \vdots & \vdots \\
    +      & +      &   &&\ddots     & +      & +      \\
    +      & +      &   &&     & \ominus & +      \\
    +      & +      & \cdots & \cdots & \cdots       & +      & \ominus

  \end{matrix}
\]
respectively, where the number of rows/columns is either clear from the context
or irrelevant. Similarly we use $\raisebox{0pt}[10pt][6pt]{$\dboxed{+}$}$,
$\dboxed{-}$, and
$\dboxed{\hspace{1.4pt}\raisebox{0pt}[6pt][1pt]{$0$}\hspace{1.4pt}}$ for blocks
that only contain $+$, $-$, or $0$ respectively, with the difference that these
blocks do neither have to be square blocks nor nonempty.

\begin{example}
  The first sign equation in Example~\ref{ex:signs} can be written as
  \[\begin{bmatrix}
      \raisebox{0pt}[11pt][0pt]{$\boxed{\ominus}$} \\[4pt]
      \raisebox{0pt}[0pt][7pt]{$\dboxed{-}$}
    \end{bmatrix}
    +
    \begin{pmatrix}
      + \\ + \\ -
    \end{pmatrix}
    =
    \begin{pmatrix}
      - \\ + \\-
    \end{pmatrix}.   
  \]
\end{example}
For any vector $v$ in $\set R^m$ we define the \emph{balance} $\tau(v)$ of $v$
to be the sum of the components of $v$. The balance of a vector $v$ with only
nonnegative components is equal to $||v||_1$. For any slanted edge $[c]$ of a
corner cone, $c$ is the difference of two unit vectors, and thus
$\tau(c)=0$. For straight edges we have $\tau(c)=\tau(\pm e_i)=\pm 1$.

\begin{definition}[Visible Vectors]
  We call a vector $v\in\set R^m$ \emph{$i$-visible}, if
  \[(+,\dots+,\ominus,+\dots,+)\] is a valid sign vector for $v$, where $\ominus$
  is at the $i$th position.
\end{definition}

The definition is motivated by corner cones. For $i>0$, a vector is $i$-visible
iff it belongs to $\lat C_i$. An $i$-visible vector $v$ has nonpositive
balance $\tau(v)\leq 0$.

\begin{lemma}
  \label{lem:zero}
  Let $k\in\{1,\dots,m\}$ and let $v_1,\dots,v_k,v\in\set R^m$ be such that
  $v_1+v_2+\dots+v_k=v$ and that each $v_i$ lies in some corner cone. Suppose
  that there is an associated sign equation and indices $r_1,\dots,r_k$ such
  that a valid sign equation projected to rows $r_1,\dots,r_k$ is of the form
  \[
    \begin{bmatrix}\;
      \boxed{\ominus}\;\;
    \end{bmatrix}=
    \begin{pmatrix}\;
      \dboxed{+}\;\,
    \end{pmatrix}.
  \]
  Then for every $j\in\{r_1,\dots,r_k\}$, the $j$th component of $v$ is zero,
  and for every $j\in\{1,\dots,m\}\setminus\{r_1,\dots,r_k\}$, the $j$th
  component of $v_i$ is zero for every $i$.
\end{lemma}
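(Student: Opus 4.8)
The plan is to unfold the block sign equation into explicit componentwise inequalities on the $v_i$ and $v$ and then run a two-sided estimate over the rows indexed by $R:=\{r_1,\dots,r_k\}$. Note first that these rows are pairwise distinct, since projecting to them yields a genuine $k\times k$ block. Writing $(w)_j$ for the $j$th component of a vector $w$, the block records, for all $s,t\in\{1,\dots,k\}$: the diagonal $\ominus$ gives $(v_s)_{r_s}\le 0$ together with the domination $|(v_s)_{r_s}|\ge\sum_{j\ne r_s}|(v_s)_j|$; each off-diagonal $+$ gives $(v_t)_{r_s}\ge 0$ for $t\ne s$; and the right-hand side $\dboxed{+}$ gives $(v)_{r_s}\ge 0$.

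First I would establish the key one-column estimate: for each fixed $t$,
\[
 \sum_{s=1}^{k}(v_t)_{r_s}=(v_t)_{r_t}+\sum_{s\ne t}(v_t)_{r_s}\le 0 .
\]
The off-diagonal terms are nonnegative, while the diagonal domination yields $\sum_{s\ne t}(v_t)_{r_s}=\sum_{s\ne t}|(v_t)_{r_s}|\le\sum_{j\ne r_t}|(v_t)_j|\le|(v_t)_{r_t}|=-(v_t)_{r_t}$, so the two pieces cancel down to something $\le 0$. Summing over $t$ and reading $v_1+\dots+v_k=v$ componentwise gives $\sum_{s=1}^k(v)_{r_s}=\sum_{t=1}^k\sum_{s=1}^k(v_t)_{r_s}\le 0$; but each $(v)_{r_s}\ge 0$ forces this sum to be $\ge 0$ as well. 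Hence $\sum_s(v)_{r_s}=0$ and every $(v)_{r_s}=0$, which is the first assertion.

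Next I would exploit that this squeeze is tight to obtain the second assertion. Since the total vanishes while every per-column sum is $\le 0$, all the inequalities above are forced to be equalities; in particular, for each $t$ we get $\sum_{j\ne r_t}|(v_t)_j|=\sum_{s\ne t}|(v_t)_{r_s}|$. Because $\{r_s:s\ne t\}=R\setminus\{r_t\}$, this says that all the off-diagonal mass of $v_t$ already sits on rows in $R$, leaving $\sum_{j\notin R}|(v_t)_j|=0$; thus $(v_t)_j=0$ for every $j\notin R$ and every $t$, which is the second assertion.

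The delicate point is the bookkeeping inside the tight domination inequality: one must line up that the $\ominus$ condition bounds $|(v_t)_{r_t}|$ below by the sum of \emph{all} remaining absolute components, that the nonnegative off-diagonal entries of column $t$ land exactly on the rows $R\setminus\{r_t\}$ (using distinctness of the $r_i$), and that saturating the bound pins the support of $v_t$ inside $R$. The chosen orientations of the signs ($\ominus$ nonpositive, and $+$ nonnegative on both the off-diagonal and the right-hand side) are precisely what make every one-column sum and hence the global sum point the same way, so that the grand total is squeezed to zero. I do not expect to need the hypothesis that each $v_i$ lies in a corner cone beyond its role in making the sign equation meaningful.
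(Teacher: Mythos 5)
Your proof is correct. For the first assertion it coincides in substance with the paper's argument: you bound the balance (component sum) of each column projected to the rows $r_1,\dots,r_k$ by $0$ using the $\ominus$ domination, sum over the columns, and squeeze against the componentwise nonnegative right-hand side. For the second assertion, however, you take a genuinely different route. The paper leans on the hypothesis that each $v_i$ lies in a corner cone: such a vector has at most one negative entry, which the sign equation forces to lie in $\{r_1,\dots,r_k\}$, so the complementary projection $\overline{\pi}(v_i)$ is componentwise nonnegative; combining $\tau(\pi(v_i))=0$ with the corner-cone balance bound $\tau(v_i)\le 0$ then squeezes $\overline{\pi}(v_i)$ to zero. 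You instead use that the $\ominus$ in a valid sign vector dominates the absolute values of \emph{all} other components of the full vector in $\set R^m$, not merely those among the projected rows, and extract the conclusion from saturation of your chain of inequalities. This is a real, if modest, gain: your argument shows the corner-cone hypothesis is superfluous for this lemma (only the sign equation is needed), whereas the paper's proof of this half genuinely uses it. Your reading of $\ominus$ as a full-vector condition is the correct one under the paper's definition, and it is also indispensable: if domination were required only within the projected rows, the second assertion would be false (take $m=2$, $k=1$, $r_1=1$, $v_1=v=(0,5)$). The remaining bookkeeping you flag, namely distinctness of the $r_i$ and the passage from a vanishing total to vanishing per-column sums, is handled correctly.
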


\begin{proof}
  Let $\pi\colon\set R^m\rightarrow \set R^{n}$ be the projection on the components
  with indices $r_1,\dots,r_k$, and $\overline{\pi}$ the projection on the
  complementary components.
  The sign equation implies that $\tau(\pi(v_i))\leq 0$ for all $v_i$.
  It follows that $\tau(\pi(v))$ has
  to be less than or equal to $0$ as well. As $\pi(v)$ only contains
  nonnegative entries, this is only possible if $\pi(v)$ is the zero
  vector. This shows the first part and also implies the equation
  \[\tau(\pi(v_1)) + \tau(\pi(v_2)) + \dots + \tau(\pi(v_k))=0.\] Since no
  summand on the left hand side is strictly positive, all the $\tau(\pi(v_i))$ have to be
  equal to 0. As every $v_i$ lies in some corner cone, and their negative
  components only have indices contained in $\{r_1,\dots,r_k\}$, we get that all the
  $\overline{\pi}(v_i)$ only have nonpositive components. Now it follows
  that all $\overline{\pi}(v_i)$ are equal to zero, since
  \[0\geq\tau(\overline{\pi}(v_i))=\tau(\overline{\pi}(v_i))+\tau(\pi(v_i))
    =\tau(v_i)\geq 0.\]

  \vspace{-1.82\baselineskip}\qed
\end{proof}

\begin{remark}
  \label{remark1}
  Clearly, if a $v_1,\dots,v_k,v$ with $v_1+\dots+v_k=v$ are such that a valid
  sign equation contains rows of the form
  \[ \begin{bmatrix}\;
      \dboxed{-}\;\,
    \end{bmatrix}=
    \begin{pmatrix}\;
      \dboxed{+}\;\,
    \end{pmatrix},
  \]
  then $v$ and the $v_i$ can only contain zero entries at the corresponding indices.

\end{remark}
\begin{proposition}
  \label{prop:conedimension}
  Let $v_1,\dots,v_m\in\set R^m\setminus\{0\}$ be such that each $v_i$ is
  $i$-visible. If no subspace of $V:=\< v_1,\dots,v_m>$ can be decomposed into a
  direct sum of more than one nonzero vector spaces, then
  $\operatorname{dim}(V)= m-1.$
\end{proposition}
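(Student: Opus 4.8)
The plan is to bound $\dim V$ from both sides, reducing everything to the linear dependencies among the generators. Collect the $v_i$ as the columns of an $m\times m$ matrix; then $\dim V$ is its rank, so $\dim V=m-\dim K$, where $K=\{\lambda\in\set R^m:\textstyle\sum_i\lambda_iv_i=0\}$ is the space of dependencies. Hence it suffices to prove $\dim K=1$, i.e.\ that there is essentially a unique dependency. The hypothesis that no subspace of $V$ splits as a coordinate-aligned direct sum of two nonzero spaces will enter at exactly two places, one for each inequality.

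First I would dispose of the upper bound $\dim V\le m-1$. If instead $\dim V=m$, then $V=\set R^m$, and $V$ itself is a subspace of $V$ that decomposes as the coordinate direct sum $\set R e_1\oplus\cdots\oplus\set R e_m$ of $m\ge 2$ nonzero spaces, contradicting the hypothesis. Thus $\dim K\ge 1$.

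For the lower bound I would argue by contradiction: assume $\dim K\ge 2$ and construct a forbidden decomposition. The crucial step is to produce a dependency $\lambda\in K$ that is genuinely \emph{mixed in sign}, i.e.\ with both $P:=\{i:\lambda_i>0\}$ and $N:=\{i:\lambda_i<0\}$ nonempty; here I would use that a linear subspace of dimension $\ge 2$ cannot lie inside the union of the closed nonnegative and nonpositive orthants, since its unit sphere is connected whereas those two orthants meet only at the origin. Writing $w:=\sum_{i\in P}\lambda_iv_i=\sum_{j\in N}(-\lambda_j)v_j$, I would apply Lemma~\ref{lem:zero} to the rows indexed by $P$: each $\lambda_iv_i$ with $i\in P$ is $i$-visible, so on those rows the sign equation has the shape $[\,\boxed{\ominus}\,]=(\,\dboxed{+}\,)$, the right-hand side being nonnegative there because $w|_P$ is a nonnegative combination of the $v_j$ with $j\in N$. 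The lemma then forces $v_i\in\set R^P$ for every $i\in P$ and $w|_P=0$, whence $w=0$; the symmetric application to the rows $N$ gives $v_j\in\set R^N$ for every $j\in N$. Consequently $V_P:=\<v_i:i\in P>\subseteq\set R^P$ and $V_N:=\<v_j:j\in N>\subseteq\set R^N$ are nonzero and coordinate-disjoint, so $V_P\oplus V_N$ is a subspace of $V$ that decomposes into two nonzero spaces — the forbidden configuration. Hence $\dim K=1$ and $\dim V=m-1$.

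The main obstacle, and the point where the visibility hypothesis does the real work, is this middle step: converting a purely algebraic dependency into a geometric splitting. Everything hinges on recognizing that the column pattern of a mixed-sign dependency is precisely the $\boxed{\ominus}$-block situation of Lemma~\ref{lem:zero}, so that the lemma can confine the two sign-classes of generators to complementary coordinate blocks. Securing a mixed-sign dependency is the comparatively soft part, but it is indispensable: an all-positive dependency (as produced by a single $m$-cycle of slanted edges) is harmless and is in fact present in the extremal case $\dim V=m-1$, so the argument must draw no conclusion from such dependencies.
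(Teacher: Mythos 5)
Your proof is correct, and its core is genuinely different from the paper's. Both arguments dispose of the case $\dim(V)=m$ identically ($\set R^m=\set R e_1\oplus\cdots\oplus\set R e_m$ is a forbidden decomposition), both read ``direct sum'' as a decomposition into subspaces supported on disjoint coordinate sets (the reading the paper's own proof relies on), and both use Lemma~\ref{lem:zero} as the decisive tool; the difference is which dependencies the lemma is applied to. The paper assumes $\dim(V)<m-1$, expresses $v_{m-1}$ and $v_m$ in terms of $v_1,\dots,v_{m-2}$, and must then treat all sign patterns of the coefficients $\alpha_i,\beta_i$; the hard case---all coefficients nonpositive, so that both available dependencies are unmixed in sign---is exactly what forces the long second half of the paper's proof, with the shared-vector bookkeeping and repeated zero-component deductions. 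You avoid that case altogether: working with the dependency space $K=\{\lambda\in\set R^m:\sum_i\lambda_iv_i=0\}$, your connectedness argument (a subspace of dimension $\ge 2$ cannot lie in the union of the two orthants, which meet only at the origin, and a subspace inside a single orthant is trivial) produces one dependency that is already mixed in sign; for such a dependency, one application of Lemma~\ref{lem:zero} to the rows $P$ (where the block is precisely $\boxed{\ominus}$ and the right-hand side is nonnegative, since each $v_j$ with $j\in N$ is nonnegative off its own visibility index) and the symmetric application to $N$ immediately confine $\<v_i:i\in P>$ and $\<v_j:j\in N>$ to complementary coordinate blocks, giving the forbidden splitting. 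The price is one ingredient foreign to the paper's sign-equation calculus, namely the topological step; the reward is the elimination of the entire case analysis. Your closing remark is also the right sanity check: unmixed dependencies genuinely occur in the extremal configuration (e.g.\ $v_i=e_{i+1}-e_i$ cyclically, whose coefficients are all $1$), so a correct proof must, as yours does, draw no conclusion from them.
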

\begin{proof}
  If $V$ is of dimension $m$, it can be decomposed into the direct sum of $m$
  nonzero vector spaces.  Suppose that $\dim(V)<m-1$, and, without loss of
  generality, that $v_1,\dots,v_{m-2}$ generate $V$, i.e.\ there are
  $\alpha_1,\dots,\alpha_{m-2}$ and $\beta_1,\dots,\beta_{m-2}\in\set R$ with
  $v_{m-1}=\sum_{i<m-1} \alpha_iv_i$ and
  $v_m=\sum_{i<m-1}\beta_iv_i$.  For these we get corresponding sign equations
  \begin{align}
    \label{eq1}
    \pm\begin{pmatrix} \ominus\\ +\\ \vdots\\ +\\ +\\ +\\ +\end{pmatrix} \pm \dots \pm
    \begin{pmatrix} +\\ +\\ \vdots\\ +\\ \ominus\\ +\\ +\end{pmatrix} =
    \begin{pmatrix} +\\ +\\ \vdots\\  +\\ +\\ \ominus\\
      + \end{pmatrix},\\
    \label{eq2}
    \pm\begin{pmatrix} \ominus\\ +\\ \vdots\\ +\\ +\\ +\\ +\end{pmatrix} \pm \dots \pm
    \begin{pmatrix} +\\ +\\ \vdots\\ +\\ \ominus\\ +\\ +\end{pmatrix} =
    \begin{pmatrix} +\\ +\\ \vdots\\  +\\ +\\ +\\\ominus
    \end{pmatrix},
  \end{align}
  where the $\pm$ reflect the fact that the $\alpha_i$ and $\beta_i$ can be
  positive or negative.  We show that there is no combination of signs for the
  $\alpha_i$ and $\beta_i$ such that both equalities hold, unless $V$ can be
  decomposed into a direct sum. We first look at~\eqref{eq2}.  From the last row
  we see that at least one $\beta_i$ has to be strictly negative, as no $v_i$ on
  the left hand side has a negative entry at index $m$, and $v_{m}$ is
  nonzero. So we can split the vectors into two groups: those with positive
  $\beta_i$ and those with strictly negative $\beta_i$. After changing the
  summation order and reorganizing the rows if necessary, Equation~\eqref{eq2}
  becomes
  \begin{alignat*}2
    \begin{bmatrix}\raisebox{0pt}[12pt]{$\boxed{\ominus}$}\\[4pt] \dboxed{+}\\[4pt] +\;\cdots\;+\\[4pt]
      +\;\cdots\;+\end{bmatrix} +
    \begin{bmatrix}\raisebox{0pt}[12pt]{$\dboxed{-}$}\\[4pt] \boxed{\oplus}\\[4pt] -\;\cdots\;- \\[4pt]
      -\;\cdots\;-\end{bmatrix} & & {} =
    \begin{pmatrix} \raisebox{0pt}[12pt]{$\dboxed{+}$} \\[4pt] \dboxed{+} \\[4pt] +\\[4pt] \ominus \end{pmatrix}.
  \end{alignat*}
  Suppose at least one $\beta_i$ is strictly positive. Then Lemma~\ref{lem:zero}
  implies that some components have to be zero, and we get a block diagonal form
  \begin{alignat*}1
    \begin{bmatrix}
      \boxed{\ominus} \\[4pt]
      \dboxed{\hspace{1.4pt}0\hspace{1.4pt}}\\[4pt]
      0\;\cdots\;0\mathstrut\\[3pt]
      0\;\cdots\;0
    \end{bmatrix} +
    \begin{bmatrix}
      \dboxed{\hspace{1.4pt}$0$\hspace{1.4pt}}\\[4pt]
      \boxed{\oplus}\\[4pt]
      0\;\cdots\;0\\[3pt]
      -\;\cdots\;-
    \end{bmatrix} & =
    \begin{pmatrix} \raisebox{0pt}[12pt]{$\dboxed{+}$} \\[4pt] \dboxed{+} \\[4pt] 0\\[3pt] \ominus \end{pmatrix}.
  \end{alignat*}
  Thus, the $v_i$ appearing with a nonzero coefficient in~\eqref{eq1} span a
  vector space that can be decomposed into a direct sum if at least one
  $\beta_i$ is strictly positive. Otherwise, with the analogous reasoning for
  $v_{m-1}$, we can suppose that all the $\alpha_i$ and $\beta_i$ are
  nonpositive, and conclude that the sign equations for $v_{m-1}$ and $v_m$ are
  of the form
  \begin{align}
    \label{eq5a}
    \begin{bmatrix}\raisebox{0pt}[11pt]{$\boxed{\oplus}$}\\ -\;\cdots\;-\\ -\;\cdots\;- \end{bmatrix}  =
    \begin{pmatrix}\raisebox{0pt}[11pt]{$\dboxed{+}$}\\ \ominus\\ + \end{pmatrix},\\
    \label{eq5b}
    \begin{bmatrix}\raisebox{0pt}[11pt]{$\boxed{\oplus}$}\\ -\;\cdots\;-\\ -\;\cdots\;- \end{bmatrix}  =
    \begin{pmatrix}\raisebox{0pt}[11pt]{$\dboxed{+}$}\\ +\\ \ominus \end{pmatrix}.
  \end{align}
  If all $\alpha_i$ (implicitly used in~\eqref{eq5a}) were nonzero, then
  the last row in~\eqref{eq5a} implies that the last components of all the $v_i$ would have to
  be zero, which is incompatible with the last row in~\eqref{eq5b}. Again with
  the analogous reasoning for the $\beta_i$ we see that not all
  $\alpha_i$ and not all $\beta_i$ are nonzero. As before we split the vectors on the
  left hand side of each equation into two blocks: vectors that appear with a
  nonzero coefficient in only one of the equations and vectors that are shared
  in both equations with nonzero coefficients, which gives, after reordering
  the rows and summands if necessary:
  \begin{alignat*}2
    \begin{bmatrix}\raisebox{0pt}[12pt]{$\boxed{\oplus}$}\\[4pt] \dboxed{-}\\[4pt] \dboxed{-}\\[4pt]-\;\cdots\;-\\[4pt]-\;\cdots\;-\end{bmatrix} +
    & \begin{bmatrix}\raisebox{0pt}[12pt]{$\dboxed{-}$}\\[4pt] \boxed{\oplus} \\[4pt] \dboxed{-} \\[4pt]-\;\cdots\;-
      \\[4pt]-\;\cdots\;-\end{bmatrix}
    &{} =\begin{pmatrix}\raisebox{0pt}[12pt]{$\dboxed{+}$}\\[4pt] \dboxed{+}\\[4pt] \dboxed{ +}\\[4pt] \ominus\\[4pt] + \end{pmatrix},\\[4pt]
    & \underbrace{
      \begin{bmatrix}\raisebox{0pt}[12pt]{$\dboxed{-}$}\\[4pt] \boxed{\oplus} \\[4pt] \dboxed{-} \\[4pt]-\;\cdots\;- \\[4pt]-\;\cdots\;-\end{bmatrix}}_{\text{shared}}+
    \begin{bmatrix}\raisebox{0pt}[12pt]{$\dboxed{-}$}\\[4pt] \dboxed{-}\\[4pt] \boxed{\oplus}\\[4pt] -\;\cdots\;-\\[4pt]-\;\cdots\;-\end{bmatrix}
    &{} =\begin{pmatrix}\raisebox{0pt}[12pt]{$\dboxed{+}$}\\[4pt]\dboxed{+}\\[4pt] \dboxed{+}\\[4pt] +\\[4pt]\ominus \end{pmatrix}.
  \end{alignat*}
  We use Remark~\ref{remark1} to determine zero components
  in the first equation:
  \begin{alignat*}2
    \begin{bmatrix} \raisebox{0pt}[12pt]{$\boxed{\oplus}$}\\[4pt] \dboxed{-}\\[4pt] \dboxed{\hspace{1.4pt}0\hspace{1.4pt}}\\[4pt]-\;\cdots\;-\\[4pt]0\,\;\cdots\,\;0\end{bmatrix} +
    & \underbrace{\begin{bmatrix}\raisebox{0pt}[12pt]{$\dboxed{-}$}\\[4pt] \boxed{\oplus} \\[4pt] \dboxed{\hspace{1.4pt}0\hspace{1.4pt}} \\[4pt]-\;\cdots\;-
        \\[4pt]0\,\;\cdots\,\;0\end{bmatrix}}_{\text{shared}}
    &{} =\begin{pmatrix}\;\raisebox{0pt}[12pt]{$\dboxed{+}$}\;\;\;\\[4pt] \dboxed{+}\\[4pt] \dboxed{\hspace{1.4pt}0\hspace{1.4pt}}\\[4pt]
      \ominus\\[4pt] 0 \end{pmatrix}.
  \end{alignat*}
  Then, doing the same for the second equation, and using the fact that we
  already know some zero components in the shared vectors, we get:
  \begin{alignat*}2
    \underbrace{
      \begin{bmatrix}\raisebox{0pt}[12pt]{$\dboxed{\hspace{1.4pt}0\hspace{1.4pt}}$}\\[4pt] \boxed{\oplus} \\[4pt] \dboxed{\hspace{1.4pt}0\hspace{1.4pt}} \\[4pt]0\;\cdots\;0 \\[4pt]0\;\cdots\;0\end{bmatrix}}_{\text{shared}}+
    \begin{bmatrix} \raisebox{0pt}[12pt]{$\dboxed{\hspace{1.4pt}0\hspace{1.4pt}}$}\\[4pt] \dboxed{-}\\[4pt] \boxed{\oplus}\\[4pt] 0\;\,\cdots\,\;0\\[4pt]-\;\cdots\;-\end{bmatrix}
    &{} =\begin{pmatrix}\raisebox{0pt}[12pt]{$\dboxed{\hspace{1.4pt}0\hspace{1.4pt}}$}\\[4pt]\dboxed{+}\\[4pt] \dboxed{+}\\[4pt] +\\[4pt]\ominus \end{pmatrix}.
  \end{alignat*}
  Denote the number of shared vectors by~$s$. If $s$ is greater than $0$, we
  look at the rows in the equation for $v_{m-1}$ where the shared vectors are
  nonzero:
  \[
    \begin{bmatrix}
      \dboxed{-}
    \end{bmatrix} +
    \underbrace{\begin{bmatrix}
        \boxed{\oplus}
      \end{bmatrix}}_{\mathclap{s \text{ many
          shared}}} =
    \begin{pmatrix}
      \dboxed{+}
    \end{pmatrix}.
  \]
  As all nonshared vectors on the left hand side only have negative components,
  we can bring them to the right hand side and get:
  \[
    \underbrace{\begin{bmatrix}
        \boxed{\oplus}
      \end{bmatrix}}_{\mathclap{s \text{ many
          shared}}} =
    \begin{pmatrix}
      \dboxed{+}
    \end{pmatrix}.
  \]
  Note that here, all the hidden entries of the shared vectors are zero.  We can
  suppose that the shared vectors are linearly independent, otherwise we could replace some
  coefficients with zero. As they are linearly independent, however, they span
  the whole space $\set R^s$, thus the shared vectors can be replaced by unit
  vectors, which leads to a decomposition of $V$ into a direct sum of vector
  spaces. It remains to handle the case where there are no shared vectors
  in~\eqref{eq5a} and~\eqref{eq5b}. In that case, certain components
  in~\eqref{eq5a} and~\eqref{eq5b} have to be zero:
  \begin{alignat*}2
    \begin{bmatrix}\raisebox{0pt}[11pt]{$\boxed{\oplus}$}\\[4pt] \dboxed{\hspace{1.4pt}0\hspace{1.4pt}} \\[4pt] -\;\cdots\;- \\[4pt] 0\,\;\cdots\;\,0\end{bmatrix} & & {} =
    \begin{pmatrix}\;\raisebox{0pt}[11pt]{$\dboxed{+}$}\;\;\;\\[4pt] \dboxed{\hspace{1.4pt}0\hspace{1.4pt}} \\[4pt] \ominus\\[4pt] 0 \end{pmatrix},\\
    &\begin{bmatrix}\raisebox{0pt}[12pt]{$\dboxed{\hspace{1.4pt}0\hspace{1.4pt}}$}\\[4pt] \boxed{\oplus} \\[4pt]0\,\;\cdots\;\,0 \\[4pt]-\;\cdots\;-\end{bmatrix}  &{} =
    \begin{pmatrix}\;\raisebox{0pt}[12pt]{$\dboxed{\hspace{1.4pt}0\hspace{1.4pt}}$}\;\;\;\\[4pt] \dboxed{+}\\[4pt] 0\\[4pt] \ominus \end{pmatrix}.
  \end{alignat*}
  The zero entries on the left hand side imply that the space spanned by $V$ can
  be decomposed into a direct sum. This completes the proof.\qed
\end{proof}

\begin{corollary}
  \label{cor:nonsumdim}
  Let $\lat L\subseteq\set Z^m$, $m\geq 3$, be a lattice of dimension less than $m-1$ such that
  no subspace of the vector space spanned by $\lat L$ can be decomposed into a
  direct sum of two nonzero spaces. Then there exists a corner cone $\lat C$ of
  the standard simplex such that $\lat L\cap \lat C=\{0\}$.
\end{corollary}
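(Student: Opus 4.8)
The plan is to prove the contrapositive of the implication ``all corner cones meet $\lat L$ nontrivially $\Rightarrow$ something impossible,'' using Proposition~\ref{prop:conedimension} together with a dimension count. So suppose, toward a contradiction, that \emph{every} corner cone of the standard simplex meets the lattice nontrivially; in particular, for each $i\in\{1,\dots,m\}$ we may pick a nonzero $v_i\in\lat L\cap\lat C_i$. Since for $i>0$ membership in $\lat C_i$ is exactly $i$-visibility, each $v_i$ is a nonzero $i$-visible vector of $\set R^m$, so the tuple $(v_1,\dots,v_m)$ satisfies the hypotheses of Proposition~\ref{prop:conedimension}.

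Next I would check that the indecomposability assumption carries over from $\operatorname{span}(\lat L)$ to $V:=\<v_1,\dots,v_m>$. This is immediate: each $v_i$ lies in $\lat L$, so $V\subseteq\operatorname{span}(\lat L)$, and every subspace of $V$ is in particular a subspace of $\operatorname{span}(\lat L)$. Moreover the two phrasings of the indecomposability hypothesis agree, since a direct sum of more than two nonzero summands can always be regrouped into two; thus ``no subspace is a direct sum of two nonzero spaces'' is the same condition as ``no subspace is a direct sum of more than one nonzero space.'' Hence $V$ inherits the hypothesis, and Proposition~\ref{prop:conedimension} gives $\dim(V)=m-1$.

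The contradiction is then a one-line dimension count: from $V\subseteq\operatorname{span}(\lat L)$ we get $\dim(V)\leq\dim(\lat L)$, while the hypothesis of the corollary gives $\dim(\lat L)<m-1$. Together with $\dim(V)=m-1$ this yields $m-1\leq\dim(\lat L)<m-1$, which is impossible. Therefore the assumption was false, and some corner cone $\lat C$ must satisfy $\lat L\cap\lat C=\{0\}$.

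I expect essentially no obstacle here: all of the heavy sign-vector and block-decomposition work has already been absorbed into Proposition~\ref{prop:conedimension}, and what remains is a short contradiction argument. The only points needing a moment of care are the two transfer steps above, namely that nonzero membership in $\lat C_i$ coincides with nonzero $i$-visibility (so that the proposition applies verbatim) and that indecomposability descends from $\operatorname{span}(\lat L)$ to the subspace $V$; both are routine.
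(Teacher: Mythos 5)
Your proof is correct and takes essentially the same route as the paper's own (very terse) proof: pick nonzero $v_i\in\lat L\cap\lat C_i$ for $i=1,\dots,m$, apply Proposition~\ref{prop:conedimension} to get $\dim\<v_1,\dots,v_m>=m-1$, and contradict $\dim(\lat L)<m-1$. The extra details you supply (that nonzero membership in $\lat C_i$ for $i>0$ is exactly $i$-visibility, and that the indecomposability hypothesis descends from the span of $\lat L$ to the subspace $V$) are precisely the steps the paper leaves implicit.
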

\begin{proof}
  Let $v_i\in \lat L\cap \lat C_i$ for all $i=1,\dots,m$ and assume they are all
  nonzero. Then, Proposition~\ref{prop:conedimension} yields:
  \[\operatorname{dim}(\lat L)\geq \operatorname{dim}\< v_1,\dots,v_m>=m-1,\] a
  contradiction.\qed
\end{proof}

\begin{corollary}
  \label{cor:dimbound1}
  Let $\lat L=\< v_1,\dots,v_k>$ be a lattice in $\set Z^m$, $m\geq 3$. If
  $k=\dim(\lat L)<\frac{2}{3}m$, then there exists a corner cone
  $\lat C$ of the standard simplex such that $\lat L\cap \lat C=\{0\}$.
\end{corollary}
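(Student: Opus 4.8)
The plan is to reduce the general (possibly decomposable) situation to the indecomposable one already settled in Corollary~\ref{cor:nonsumdim}, and to extract the constant $\tfrac23$ from a blockwise dimension count. First I would argue by contradiction: assume $\lat L\cap\lat C_i\neq\{0\}$ for \emph{every} corner cone, pick a nonzero $v_i\in\lat L\cap\lat C_i$ for each $i\in\{1,\dots,m\}$ (so each $v_i$ is $i$-visible), and set $V:=\<v_1,\dots,v_m>\subseteq\set R^m$. Since an $i$-visible nonzero vector has a strictly negative $i$th entry, $V$ has full coordinate support, and $\dim\lat L\geq\dim V$. I would then split the coordinates along the finest coordinate-respecting direct sum decomposition $V=W_1\oplus\cdots\oplus W_t$ into indecomposable summands, where $W_\ell$ is supported on a block $I_\ell\subseteq\{1,\dots,m\}$ of size $s_\ell$, with $\sum_\ell s_\ell=m$ and $\dim V=\sum_\ell\dim W_\ell$.

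The engine of the reduction is a lifting observation: if the projection $\pi_\ell(\lat L)$ to the coordinates in $I_\ell$ has a free corner cone $\lat C_p$ with $p\in I_\ell$, then already $\lat L\cap\lat C_p=\{0\}$. Indeed, a $p$-visible element of $\lat L$ would project to a nonzero $p$-visible element of $\pi_\ell(\lat L)$, because all coordinates other than $p$ are nonnegative and discarding the remaining blocks only relaxes the dominance inequality defining $p$-visibility (this is reminiscent of the translation invariance in Lemma~\ref{lem:translatelonely}). Hence it suffices to produce a free corner cone inside a single block.

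Next I would feed each block of size $s_\ell\geq3$ into Corollary~\ref{cor:nonsumdim}, applied in ambient dimension $s_\ell$. As $W_\ell$ is indecomposable, either $\dim W_\ell<s_\ell-1$, in which case Corollary~\ref{cor:nonsumdim} yields a free cone in the block and, by the lifting observation, a free cone for $\lat L$—a contradiction—or else $\dim W_\ell\geq s_\ell-1$. The arithmetic behind the constant is that $s_\ell-1\geq\tfrac23 s_\ell$ precisely when $s_\ell\geq3$, so every surviving block of size at least three contributes at least $\tfrac23 s_\ell$ to the dimension; size-one blocks contribute $\dim W_\ell=1=s_\ell\geq\tfrac23 s_\ell$ as well. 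Summing over all blocks would give $\dim\lat L\geq\tfrac23\sum_\ell s_\ell=\tfrac23 m$, contradicting $k<\tfrac23 m$.

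The hard part is the blocks of size two, and I expect this to be the main obstacle. An indecomposable two-element block that keeps both of its corner cones occupied is forced to be spanned by a single difference $e_i-e_j$, so it contributes only $\tfrac12 s_\ell$ to the dimension and, taken alone, defeats the $\tfrac23$ bound. This is exactly where the zeroth corner cone $\lat C_0$ (the positive orthant), which plays no role in Corollary~\ref{cor:nonsumdim}, must be brought in. The point to establish is that $\<e_i-e_j>$ contains no nonzero nonnegative vector, so any nonnegative witness of $\lat C_0\cap\lat L\neq\{0\}$ must vanish on every size-two block; I would try to turn this into a dichotomy—either $\lat C_0$ is already free (and we are done), or the nonnegative witness is carried by a size-one block that can be deleted, reducing $(m,k)$ to $(m-1,k-1)$ while preserving $k<\tfrac23 m$, so that induction on $m$ applies. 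Controlling this interaction between $\lat C_0$ and the size-two blocks, and in particular checking that the free cone survives when it is lifted back after the reduction, is the delicate step on which the whole argument hinges.
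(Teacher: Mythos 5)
Your plan reproduces the skeleton of the paper's own argument---assume every corner cone meets $\lat L$ nontrivially, split into indecomposable coordinate blocks, apply Corollary~\ref{cor:nonsumdim} blockwise, and extract the constant $\frac23$ from a per-block ratio of dimension to block size---but your more careful bookkeeping exposes a problem that the paper's proof glosses over. The paper disposes of your hard case by asserting that a piece spanned by a \emph{single} generator can meet only one corner cone (``immediate from the sign vector''). That assertion is wrong: $v=e_j-e_i$ is $i$-visible, yet $\<v>$ also contains $e_i-e_j\in\lat C_j$, so one dimension can serve two cones. This is exactly your size-two block $\<e_i-e_j>$ of ratio $\frac12$, and it invalidates the paper's count of at most $\frac32k<m$ occupied cones just as it blocks your count.

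The deeper point is that the ``delicate step'' you flagged---controlling the interaction with $\lat C_0$ and lifting a free cone back after deleting a block---cannot be carried out, because the corollary as stated is false for $m\geq5$. Take $\lat L=\<e_1-e_2,\,e_3-e_4,\,e_5>\subseteq\set Z^5$, so that $k=3<\frac{10}{3}=\frac23m$. Then $e_2-e_1\in\lat C_1$, $e_1-e_2\in\lat C_2$, $e_4-e_3\in\lat C_3$, $e_3-e_4\in\lat C_4$, $-e_5\in\lat C_5$, and $e_5\in\lat C_0$, so \emph{every} corner cone meets $\lat L$ in a nonzero point. This is precisely the configuration your analysis isolates: two difference blocks of ratio $\frac12$ plus a straight one-dimensional block whose sole job is to occupy both $\lat C_5$ and $\lat C_0$. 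Your proposed repair deletes that block and inducts on $\<e_1-e_2,e_3-e_4>\subseteq\set Z^4$; but the only free cone of the reduced lattice is the reduced $\lat C_0$, and it does not lift back, since the deleted block itself re-occupies $\lat C_0$. So the lifting failure you anticipated is not a verification detail but an unavoidable obstruction. (Two smaller remarks: Corollary~\ref{cor:nonsumdim} requires that \emph{no subspace} of the span decomposes, which is strictly stronger than indecomposability of a block, so your $s_\ell\geq3$ case also needs more care; and the family $\<e_1-e_2,\dots,e_{2r-1}-e_{2r},\,e_{2r+1}>\subseteq\set Z^{2r+1}$, of dimension $r+1$ with all $2r+2$ cones occupied, shows that any correct constant in this corollary is at most $\frac12$, not $\frac23$. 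The error propagates to the proof of Theorem~\ref{thm:dimbound}, which invokes this corollary.)
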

\begin{proof}
  Using Corollary~\ref{cor:nonsumdim} and projecting to the relevant coordinates
  shows that any subset $S$ of $v_1,\dots,v_k$ of some cardinality $s$ such that
  $S$ cannot be decomposed into a direct sum can only contain nonzero vectors
  of $s+1$ corner cones. Additionally for $s=1$, $S$ can only contain nonzero
  vectors of one corner cone. In fact, if there is a $v\in\{v_1,\dots,v_k\}$ and
  an $i\in\{1,\dots,m\}$ such that $v$ is $i$-visible, then it is immediate from
  the sign vector $(+,\dots+,\ominus,+\dots,+)$ of $v$ that
  $\<v>\cap \lat C_j=\{0\}$ for all $j\neq i$. It follows that $V$ can be
  decomposed into the sum of at most $\frac{k}{2}$ many two-dimensional vector
  spaces, each containing nonzero vectors of $3$ corner cones.\qed
\end{proof}

In order to derive a dimension bound such that both conditions in
part~\ref{prop:infinite:2} of Proposition~\ref{prop:infinite} are met, we need
the following lemma that allows us to construct a nonlonely point in a corner
cone from a nonlonely point in a different corner cone. A geometric interpretation of the statement is given in
Figure~\ref{fig:move}.

\begin{lemma}
  \label{lem:switchcones}
  Let $\lat C_i$ be a corner cone, $[c]$ be a slanted edge in $\lat C_i$, and
  let $j\in\set N$ be such that the $j$th component of $c$ is 1. If
  $\ell\in\lat L$ and $\alpha\in\set N$ are such that
  $v:=\ell + \alpha c\in\lat C_i$, then there exists a $\beta\in\set N^*$ with
  $\ell + \beta(-c)\in\lat C_j\setminus\{0\}$, where $[-c]$ is a slanted edge
  in the corner cone $\lat C_j$.
\end{lemma}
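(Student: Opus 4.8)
The plan is to reduce the whole statement to a short coordinate computation, exploiting the very rigid shape of a slanted edge. Since $[c]$ is a slanted edge of $\lat C_i$, the vector $c$ is a difference of two distinct unit vectors; as its $j$th component equals $1$, this forces $c=e_j-e_i$ with $i\neq j$ and $i,j\geq 1$. Consequently $-c=e_i-e_j$, which is a slanted edge of $\lat C_j$, confirming the side claim. The observation I would build everything on is that adding any scalar multiple of $c$ to a vector alters only its $i$th and $j$th coordinates and leaves its balance unchanged, because $\tau(c)=0$ for a slanted edge.

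First I would record the cone-membership criterion in a convenient form. For $k\geq 1$ a vector $u$ lies in $\lat C_k$ iff it is $k$-visible, i.e.\ iff $u_l\geq 0$ for all $l\neq k$ and $\tau(u)\leq 0$; the balance inequality is just the $\ominus$-condition rewritten via $\sum_{l\neq k}u_l=\tau(u)-u_k$, and it automatically forces $u_k\leq 0$. Applying this to $v=\ell+\alpha c\in\lat C_i$, I extract exactly the data I need: nonnegativity of $v$ at every index $\neq i$ gives $\ell_l\geq 0$ for all $l\neq i,j$ (where $v_l=\ell_l$), while $\tau(v)\leq 0$ together with $\tau(c)=0$ gives $\tau(\ell)=\tau(v)\leq 0$.

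Then I would test the candidate $w:=\ell+\beta(-c)$ against the same criterion for $\lat C_j$. Its coordinates at indices $l\neq i,j$ coincide with those of $\ell$ and are therefore nonnegative; its $i$th coordinate is $\ell_i+\beta$; and its balance is $\tau(w)=\tau(\ell)\leq 0$, again because $\tau(-c)=0$. Hence $w\in\lat C_j$ precisely when $\ell_i+\beta\geq 0$. Choosing $\beta:=\max(1,\,1-\ell_i)\in\set N^*$ makes $w_i=\ell_i+\beta\geq 1>0$, which in one stroke certifies that $\beta$ is a positive integer, that $w\neq 0$, and that $w\in\lat C_j\setminus\{0\}$, as required.

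The step I expect to carry the weight is the balance bookkeeping: the nonpositivity of $\tau$ is the only genuinely global constraint in the membership criterion, and it transfers from $\lat C_i$ to $\lat C_j$ for free precisely because translation along a slanted edge preserves $\tau$. Once this is seen, the only remaining condition is the single local requirement $w_i\geq 0$ on one coordinate, which is met by taking $\beta$ large enough; the displayed choice is simply the least positive value that additionally excludes $w=0$. Notably, neither the sign of $\alpha$ nor the hypothesis $\ell\in\lat L$ enters the geometric argument — they matter only for the intended application, in which $w\sim\beta(-c)$ witnesses non-loneliness in $\lat C_j$.
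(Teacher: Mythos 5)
Your proof is correct and follows essentially the same route as the paper: both arguments note $c=e_j-e_i$, pick $\beta$ just large enough to make the $i$th coordinate positive, observe the remaining coordinates are unchanged and nonnegative, and settle the $\ominus$-condition at index $j$ using exactly the inequality $\tau(v)\leq 0$ (the paper verifies it by an explicit inequality chain, you by noting $\tau$ is preserved along $c$). In fact your $\beta=\max(1,1-\ell_i)$ coincides with the paper's $\gamma-\alpha=\max(-v_i,\alpha)+1-\alpha$, so the two proofs differ only in presentation.
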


\begin{proof}
  By definition, the components of $c$ are all zero except for the $i$th
  component, which is $-1$, and the $j$th component for some $j\neq i$, which is
  $1$. Thus $[-c]$ is a slanted edge in $\lat C_j$. Set
  $\gamma:=\max(-v_i,\alpha)+1$, where $v_i$ is the $i$th component of $v$. Then
  $\tilde v:=v-\gamma c$ is $j$ visible, as $\tilde v_i = v_i+\gamma>0$,
  $\tilde v_k=v_k\geq 0$ for all $k\neq i,j$ and
  $\tilde v_j = v_j - \gamma \leq 0$ with
  \[-\tilde v_j = -v_j+\gamma = -v_j-v_i+(v_i+\gamma) >-v_j + \sum_{k\neq i} v_k
    + \tilde v_i=\sum_{k\neq j}\tilde v_k.\]
  Then, with $\beta:=\gamma-\alpha\in\set N^*$, we get  $\ell + \beta (-c) = \ell +
  (\alpha-\gamma) c = \tilde v\in \lat C_j$.\qed
\end{proof}

\begin{figure}
  \centering
    \begin{tikzpicture}[scale=0.5]
      \begin{scope}
        \clip (-4.5,-4.5) rectangle (4.5,4.5);
        \draw[dashed] (-5,5) -- (5,-5);
        \draw[color=darkred,->] (-2.99,2.995)--(-2.025,1.05);
        \draw[->] (0,0)--(-3,3);
        \fill (-2,1) circle (2pt);
        \node at (-1.2,2) {$\alpha c$};
        \node at (-2.8,1.7) {$\ell$};
        \node at (-2,0.6) {$v$};
      \end{scope}
      \draw[thick,<->] (-5,0) -- (5,0); \draw[thick,<->] (0,-5) -- (0,5);
      \draw[->] (4.5,1) to[in=135,out=45] (7.5,1);
      \node at (6,2) {$v\rightarrow \tilde{v}$};
      \begin{scope}
        \clip (7.5,-4.5) rectangle (16.5,4.5);
        \draw[dashed] (7,5) -- (17,-5);
        \draw[->] (12,0)--(13,-1);
        \draw[color=darkred,->] (13,-1)--(13.975,-2.95);
        \fill (14,-3) circle (2pt);
        \node at (13.8,-.45) {$ \beta (-c)$};
        \node at (13.2,-2.2) {$\ell$};
        \node at (14,-3.4) {$\tilde{v}$};

      \end{scope}
      \draw[thick,<->] (7,0) -- (17,0); \draw[thick,<->] (12,-5) -- (12,5);
    \end{tikzpicture}
  \caption{Illustration of Lemma~\ref{lem:switchcones} in dimension 2 with
    $c=(-1,1), \ell=(1,-2), v=(-2,1), \alpha=3, \gamma=4, \beta=1$ and $\tilde v
    = (2,-3).$}
  \label{fig:move}
\end{figure}
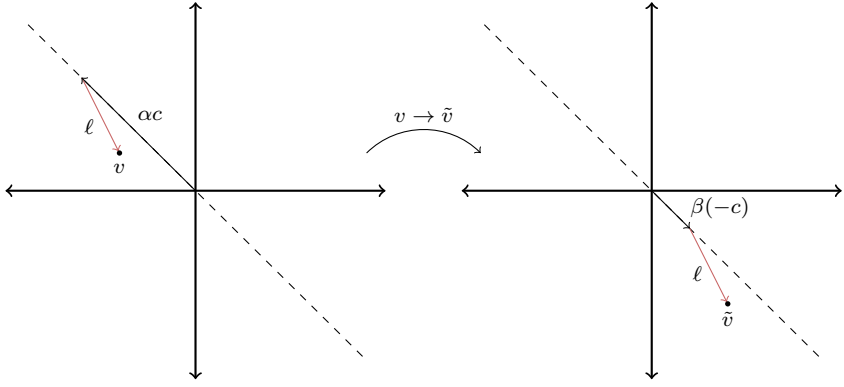
\begin{theorem}
  \label{thm:dimbound}
  Let $\lat L$ be a lattice in $\set Z^m$. If $\dim(\lat L)< \frac{1}{3}(m-4)$,
  then there exists a slanted edge $[c]$ in a corner cone $\lat C$ such that
  $\lat L\cap \lat C =\{0\}$ and $(\lat L+\<c>)\cap \lat C=[c]$.
\end{theorem}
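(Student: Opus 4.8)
The plan is to argue by contradiction, exploiting that the conclusion of the theorem is \emph{exactly} the pair of hypotheses occurring in part~\ref{prop:infinite:2} of Proposition~\ref{prop:infinite}: a slanted edge $[c]$ in a corner cone $\lat C$ with $\lat L\cap\lat C=\{0\}$ and $(\lat L+\<c>)\cap\lat C=[c]$ is precisely one all of whose points are lonely. So I would assume that \emph{every} corner cone $\lat C_i$ with $\lat L\cap\lat C_i=\{0\}$ has \emph{all} of its slanted edges violating the second condition; call such cones \emph{good}, the rest \emph{bad}. For a good cone $\lat C_i$ the first condition holds, and (by the same case analysis as in the proof of part~\ref{prop:infinite:2} of Proposition~\ref{prop:infinite}) the failure of the second condition for a slanted edge $c=e_j-e_i$ yields a genuine witness: there are $\ell\in\lat L\setminus\{0\}$ and $\alpha\in\set N^*$ with $\ell+\alpha c\in\lat C_i$.

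Next I would locate good cones in quantity. Since $\dim(\lat L)<\tfrac13(m-4)<\tfrac23 m$, Corollary~\ref{cor:dimbound1} shows good cones exist, and the decomposition-into-two-dimensional-pieces count in its proof bounds the number of corner cones among $\lat C_1,\dots,\lat C_m$ meeting $\lat L$ nontrivially by $\tfrac32\dim(\lat L)$. Hence at least $m-\tfrac32\dim(\lat L)>\tfrac{m+4}{2}$ of $\lat C_1,\dots,\lat C_m$ are good; these are exactly the cones possessing slanted edges, and by assumption every one of their slanted edges is bad.

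The device that converts these failures into dimension is Lemma~\ref{lem:switchcones}. Applied to the witness $\ell+\alpha c\in\lat C_i$ with $c=e_j-e_i$ (whose $j$th component is $1$), it produces $\beta\in\set N^*$ with $\ell+\beta(-c)\in\lat C_j\setminus\{0\}$, a nonlonely point on the opposite slanted edge $[-c]$ of $\lat C_j$. I would record two consequences: the relation ``the edge $\{i,j\}$ is bad'' is symmetric, so the good cones form a clique of bad edges; and adjoining the single vector $c=e_j-e_i$ to $\lat L$ forces both $\lat C_i$ (which now contains $c$) and $\lat C_j$ (which now contains $-c$) to meet the enlarged lattice nontrivially. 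Choosing a matching $P$ of slanted edges that pairs up the good cones---possible precisely because they form a clique---and setting $\lat M=\lat L+\<P>$, I would arrange that \emph{every} one of $\lat C_1,\dots,\lat C_m$ meets $\lat M$ nontrivially.

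Then the contrapositive of Corollary~\ref{cor:dimbound1} forces $\dim(\lat M)\ge\tfrac23 m$, whereas $\dim(\lat M)\le\dim(\lat L)+|P|=\dim(\lat L)+\lceil g/2\rceil$ with $g$ the number of good cones; rearranging gives $\dim(\lat L)\ge\tfrac23 m-\lceil g/2\rceil$, and this already delivers the contradiction $\dim(\lat L)\ge\tfrac13(m-4)$ whenever $g$ is not too large (roughly $g\le\tfrac23(m+4)$). The \textbf{main obstacle} is the complementary regime of very many good cones, where the matching $P$ carries too many edges for this count to close. There I expect to argue directly from the clique: a good cone $\lat C_i$ all of whose slanted edges are bad injects the $m-1$ lattice parts $\ell$ of its edge‑witnesses, each of which is nonnegative in every coordinate outside the pair $\{i,j\}$ yet, because $\lat C_i$ is good, is not $i$‑visible. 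The expectation is that this near‑nonnegativity, together with the distinctness of the exceptional coordinate $j$, forces these vectors to span a subspace whose dimension grows linearly in $m$, again contradicting $\dim(\lat L)<\tfrac13(m-4)$. Making this independence quantitative---equivalently, choosing the augmentation so that the arithmetic closes uniformly in $g$ and the constant $\tfrac13(m-4)$ emerges---is the step I anticipate will demand the most care, alongside the boundary/decomposability bookkeeping already implicit in Corollary~\ref{cor:dimbound1}.
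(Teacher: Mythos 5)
Your reduction is sound, and the first half of your plan is an appealing idea, but it has a structural flaw worth noting before the main gap: the matching argument never uses the negated conclusion at all. Adjoining the pairing vectors $e_j-e_i$ makes $\lat C_i$ and $\lat C_j$ meet $\lat M$ nontrivially whether or not any edge is ``bad,'' so the inequality $\dim(\lat L)\ge\frac23m-\lceil g/2\rceil$ holds unconditionally. Consequently, under the hypothesis $\dim(\lat L)<\frac13(m-4)$, your ``small $g$'' case simply never occurs (it only shows $g$ is forced to be larger than roughly $\frac23(m+4)$); it contributes nothing toward contradicting the assumed failure of the theorem, and the entire burden falls on the regime you did not prove. (A separate, fixable technicality: Corollary~\ref{cor:dimbound1} as stated allows the trivially intersecting cone to be $\lat C_0$, and your construction does not guarantee $\lat M\cap\lat C_0\neq\{0\}$; you would need the stronger form, visible in the proofs of Proposition~\ref{prop:conedimension} and Corollary~\ref{cor:nonsumdim}, that the cone can be found among $\lat C_1,\dots,\lat C_m$.)

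The genuine gap is therefore the large-$g$ regime, which includes the generic case $g=m$, i.e.\ $\lat L\cap\lat C_i=\{0\}$ for every $i$, where all hypothesized failures are edge failures. There your proposal offers only the \emph{expectation} that the witnesses $\ell_{i,j}$ (nonnegative outside $\{i,j\}$, not $i$-visible) span a space of dimension linear in~$m$ --- but that expectation \emph{is} the theorem, and it does not follow from near-nonnegativity alone: witnesses from different cones may share negative coordinates and be heavily linearly dependent; indeed the bound being proved is only about $m/3$, so most witnesses are necessarily dependent, and any correct argument must account for exactly how much dependence is possible. The paper's proof does this in three steps absent from your plan: (i) a sign-equation analysis (using Lemma~\ref{lem:switchcones}) showing each witness $\ell_{i,j}$ is $i$-visible, $j$-visible, or has exactly two strictly negative entries, at positions $i$ and $j$; (ii) a selection of $\frac m2$ of these lattice vectors whose negative supports are pairwise disjoint and cover $\{1,\dots,m\}$; (iii) coordinate-pairing projections $\psi_\pi\colon\set R^m\to\set R^{m/2}$, $(u_1,\dots,u_m)\mapsto(u_{\pi(1)}+u_{\pi(2)},\dots,u_{\pi(m-1)}+u_{\pi(m)})$, together with a counting argument over permutations showing some $\psi_\pi$ maps at least $\frac m2-2$ of the selected vectors to nonzero visible vectors in distinct corner cones of the halved-dimension space. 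Only then does Corollary~\ref{cor:dimbound1}, applied after projecting to $\frac m2-2$ coordinates, give $\dim(\lat L)\ge\frac23\bigl(\frac m2-2\bigr)=\frac13(m-4)$; the constant in the theorem arises precisely from this dimension-halving device, which is the idea your proposal is missing.
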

\begin{proof}
  If $m\leq 4$, there is nothing to show. Suppose $m>4$ and, without loss of
  generality, even.  If there exist such
  $c$ and $\lat C$, then there is no nonzero $\ell\in\lat L$ and no nonzero
  $\alpha\in\set Z$ such that $\ell+\alpha c\in\lat C$.  Thus we can prove the claim
  by showing that if there exist $m(m-1)$ many nonzero
  $\ell_{1,2},\ell_{1,3}\dots,\ell_{2,1},\ell_{2,3},\dots,\ell_{m,m-1}\in\lat L$
  and nonzero $\alpha_{1,2},\dots,\alpha_{m,m-1}\in\set Z$ such that for each corner cone $\lat C_i$ and each slanted edge
  $[c_{i,j}]$ the vector
  $v_{i,j}:=\ell_{i,j}+\alpha_{i,j}c_{i,j}$ is $i$-visible, then
  $\dim(\lat L)\geq\frac{1}{3}(m-4)$. So suppose such $\ell_{i,j}$ and
  $\alpha_{i,j}$ exist. Then $(\dboxed{+},\ominus,\dboxed{+})$ is a valid sign
  vector for $v_{i,j}$, with $\ominus$ at the $i$th position. We first show that
  each $\ell_{i,j}$ is either $i$-visible, $j$-visible, or has exactly two
  strictly negative entries, at indices $i$ and $j$. For the moment, we focus on
  $i=1, j=2$, allowing us to drop both indices. The reasoning for all other
  pairs $i,j$ is analogous.  We get the equation:
  $\ell + \alpha c = v.$

  \noindent If $\alpha\leq 0$, we can add $-\alpha c$ to both
  sides of the equation, not perturbing the $1$-visibility of the right
  hand side, which shows that $\ell$ is $1$-visible. Otherwise, we get a sign
  equation with unknown entries for $\ell$,
  \[
    \substack{\begin{pmatrix}
        \raisebox{0pt}[11pt]{$?$}\\[4pt]
        ?\\[4pt]
        \raisebox{0pt}[0pt][8pt]{$\dboxed{\hspace{1.4pt}?\hspace{1.4pt}}$}
      \end{pmatrix}\\ \ell} +
    \substack{\begin{pmatrix}
        \raisebox{1pt}[11pt]{$\ominus$}\\[4pt]
        \raisebox{2.5pt}{$\oplus$}\\[4pt]
        \raisebox{0pt}[0pt][8pt]{$\dboxed{\hspace{1.4pt}0\hspace{1.4pt}}$}
      \end{pmatrix}\\ \alpha c} =
    \substack{\begin{pmatrix}
        \raisebox{0pt}[11pt]{$\ominus$}\\[4pt]
        \raisebox{.5pt}{$+$}\\[4pt]
        \raisebox{0pt}[0pt][8pt]{$\dboxed{+}$}
      \end{pmatrix}\\ v}.\]
  The signs for all but two components of $\ell$ are immediate:
  \[
    \begin{pmatrix}
      \raisebox{0pt}[11pt]{$?$}\\[4pt]
      ?\\[4pt]
      \raisebox{0pt}[0pt][8pt]{$\dboxed{+}$}
    \end{pmatrix} +
    \begin{pmatrix}
      \raisebox{1pt}[11pt]{$\ominus$}\\[4pt]
      \raisebox{2.5pt}{$\oplus$}\\[4pt]
      \raisebox{0pt}[0pt][8pt]{$\dboxed{\hspace{1.4pt}0\hspace{1.4pt}}$}
    \end{pmatrix} =
    \begin{pmatrix}
      \raisebox{0pt}[11pt]{$\ominus$}\\[4pt]
        \raisebox{.5pt}{$+$}\\[4pt]
      \raisebox{0pt}[0pt][8pt]{\dboxed{+}}
    \end{pmatrix}.
  \]
  As $\tau(c)=0$ and $\tau(v)\leq 0$, we get that $\tau(\ell)\leq
  0$. Thus, if the second component of $\ell$ is positive, then it follows that
  $\ell$ is $1$-visible with a strictly negative first component. If the second
  component of $\ell$ is negative, we can apply Lemma~\ref{lem:switchcones} to
  see that there exists a $\beta\in\set N^*$ such that
  $\ell+\beta(-c)$ is $2$-visible, yielding
  \[
    \begin{pmatrix}
      \raisebox{0pt}[11pt]{?}\\[4pt]
      -\\[4pt]
      \raisebox{0pt}[0pt][8pt]{\dboxed{+}}
    \end{pmatrix} +
    \begin{pmatrix}
      \raisebox{1pt}[11pt]{$\oplus$}\\[4pt]
      \raisebox{0pt}{$\ominus$}\\[4pt]
      \raisebox{0pt}[0pt][8pt]{\dboxed{\hspace{1.4pt}0\hspace{1.4pt}}}
    \end{pmatrix} =
    \begin{pmatrix}
      \raisebox{0pt}[11pt]{$+$}\\[4pt]
      \ominus\\[4pt]
      \raisebox{0pt}[0pt][8pt]{\dboxed{+}}.
    \end{pmatrix}
  \]
  With the same reasoning as above we can determine that $\ell$ is either
  $2$-visible or its first component is strictly negative. This shows our claim
  for the $\ell_{i,j}$. It follows that for each pair $(i,j)$, the vector
  $\ell_{i,j}$ is such that it has a strictly negative entry at $i$, or $j$, or
  both. Thus we can find at least $\frac m 2$ pairwise different
  $\ell_1,\dots,\ell_{\frac m 2}\in\lat L$ such that no two $\ell_i$ have a
  negative entry at the same index, and for each index in $\{1,\dots,m\}$, there
  is exactly one $\ell_i$ with a negative entry at that position.
  We now map these lattice elements to $i$-visible vectors,
  $i=1,\dots,\frac m 2-2$, in $\set Z^{m/2}$.

  \noindent
  For any permutation $\pi$ of $1,\dots,m$ consider the surjective linear map
  \begin{alignat*}1
    & \psi_\pi:\set R^m\rightarrow \set R^{m/2},\\
    & (u_1,\dots,u_m)\mapsto
    (u_{\pi(1)}+u_{\pi(2)},u_{\pi(3)}+u_{\pi(4)},\dots,u_{{\pi(m-1)}}+u_{\pi(m)}).
  \end{alignat*}
  There are $n(m):=m!/2^{m/2}$ many such maps.  We say a vector $u$
  and a map $\psi_\pi$ are compatible, if:
  \begin{itemize}
  \item $u$ is $i$-visible for some $i$, and $\psi_\pi(u)\neq 0$. If
    $a\in\set N$ is such that $\pi(a)=i$, then $\psi_\pi$ is
    $\lfloor \frac{a+1}{2}\rfloor$-visible.
  \item $u$ contains exactly two strictly negative entries at indices $i$ and
    $j$, and there is an odd integer $a$ such that $\pi(a)=i$ and $\pi(a+1)=j$,
    i.e.\ when applying $\psi_\pi$ on $v$, the two negative entries are added
    together to give an $\frac{a+1}{2}$-visible vector.
  \end{itemize}
  We now show that there exists a permutation $\pi$ such that at least
  $\frac m 2 -2$ many $\ell_i$ are compatible to $\psi_\pi$. In fact we can choose
  $\pi$ such that all $\ell_i$ with exactly two negative entries are compatible with
  $\psi_\pi$, as they do not have negative entries at the same indices. This
  leaves us with some even number $k\geq 0$ of indices not yet considered for
  $\pi$ and $k$ many $\ell_i$ that could potentially be incompatible to such a
  permutation. Furthermore, there are $n(k)$ many permutations left to choose
  from.  Each of the remaining $\ell_i$ is contained in a different corner cone,
  say $\lat C_i$, and so $\ell_i$ is incompatible if $\psi_\pi(\ell_i)=0$. For $k> 2$,
  each $\ell_i$ can be in the kernel of at most $n(k-2)$ many of the remaining
  permutations (this is the case if $\ell_i$ is contained in a slanted edge of a
  corner cone). As there are $k$ ($k>2$, even) many such $\ell_{i}$, there has to be a
  $\psi_\pi$ for which the number of $i$-visible $\ell_i$ that are mapped to zero
  is at most
 \[\left\lfloor k \frac{n(k-2)}{n(k)}\right\rfloor =
   \left\lfloor\frac{2}{k-1}\right\rfloor=0.\] For $k=2$, there is only one
 choice for $\pi$, and we could be in the situation where both of the $\ell_i$
 have to be mapped to zero.  For any such $\pi$, the images of the $\ell_{i}$
 therefore contain at least $m/2-2$ many nonzero vectors with $m/2-2$ different
 sign patterns (after potentially reordering the rows)
  \[
    \begin{pmatrix}
      \ominus\\
      +\\
      \vdots\\
      +\\
      +\\
      +
    \end{pmatrix},
    \begin{pmatrix}
      +\\
      \ominus\\
      \vdots\\
      +\\
      +\\
      +
    \end{pmatrix},\dots,
    \begin{pmatrix}
      +\\
      +\\
      \vdots\\
      \ominus\\
      +\\
      +
    \end{pmatrix}.
  \]
  By projecting to the first $\frac m 2 -2$ coordinates and using
  Corollary~\ref{cor:dimbound1}, it follows that
  \[\dim\<\ell_{1},\dots,\ell_{\frac m
      2}>=\dim\<\psi_\pi(\ell_{1}),\dots,\psi_\pi(\ell_{\frac m 2})>\geq
    \frac{2}{3}\bigl(\frac{m}{2}-2\bigr)=\frac{1}{3}(m-4).\]
  This proves the claim.\qed
\end{proof}

  Without further restrictions on $\lat L$, there is no analogous result for
  straight edges, i.e.\ there is no upper bound for the dimension proportional to $m$
  such that lower dimensional lattices necessarily lead to infinitely many
  lonely points on at least one straight edge. For any $m$, the lattice
  generated by $(1,0,\dots,0)$ yields only finitely many lonely points on any
  straight edge.

\section{Conclusion and Open Questions}
\label{sec:conclusion}

We translated the problem of reducing the order of a C-finite sequence to
questions about which points in a dilated simplex are not connected to any other
point in the simplex via a specific lattice. Our answers to these questions are
in the form of algorithms that determine when the number of these points grows
indefinitely with the dilation, and also compute the exact number if there are
only finitely many lonely points. Furthermore we showed that if the
dimension of the lattice is small enough, then the number of lonely points
always grows indefinitely.

Theorem~\ref{thm:dimbound} is helpful for our original application to C-finite
sequences, because the lattices appearing in this context are typically
small. We do not know however whether the bound of Theorem~\ref{thm:dimbound} is
tight enough to cover all cases of interest. If it is not, we can still use the
Algorithms from Section~\ref{sec:algs} to see whether there are enough lonely
points to derive a finite degree bound for the ansatz.  At the moment, we do not
know whether this is always the case.

As for extensions of our theoretical results, there are immediate questions that
are rooted in discrete geometry: Can we find a closed form expression depending
on $d$ for the number of lonely points in $d\lat S$ for a given lattice? How
many lonely points are there in more involved convex polytopes? How do linear
transformations on the lattice affect lonely points? While these kinds of
questions are more removed from the initial number theoretic application, their
pursuit may lead to valuable insight.

\bibliographystyle{spmpsci}
\bibliography{references}

\end{document}